\theoremstyle{plain}
\newtheorem{thm}{\protect\theoremname}
\theoremstyle{plain}
\newtheorem{lem}[thm]{\protect\lemmaname}
\theoremstyle{plain}
\newtheorem{cor}[thm]{\protect\corollaryname}
\providecommand{\corollaryname}{Corollary}
\providecommand{\lemmaname}{Lemma}
\providecommand{\theoremname}{Theorem}
\begin{document}
\title{A necessary and sufficient condition for the stability of linear Hamiltonian
systems with periodic coefficients}
\author{Hong Qin}
\email{hongqin@princeton.edu}

\affiliation{Plasma Physics Laboratory, Princeton University, Princeton, NJ 08543,
U.S.A}
\affiliation{School of Physical Sciences, University of Science and Technology
of China, Hefei, 230026, China}
\begin{abstract}
Linear Hamiltonian systems with time-dependent coefficients are of
importance to nonlinear Hamiltonian systems, accelerator physics,
plasma physics, and quantum physics. It is shown that the solution
map of a linear Hamiltonian system with time-dependent coefficients
can be parameterized by an envelope matrix $w(t)$, which has a clear
physical meaning and satisfies a nonlinear envelope matrix equation.
It is proved that a linear Hamiltonian system with periodic coefficients
is stable iff the envelope matrix equation admits a solution with
periodic $\sqrt{w^{\dagger}w}$ and a suitable initial condition.
The mathematical devices utilized in this theoretical development
with significant physical implications are time-dependent canonical
transformations, normal forms for stable symplectic matrices, and
horizontal polar decomposition of symplectic matrices. These tools
systematically decompose the dynamics of linear Hamiltonian systems
with time-dependent coefficients, and are expected to be effective
in other studies as well, such as those on quantum algorithms for
classical Hamiltonian systems. 
\end{abstract}
\keywords{Linear Hamiltonian system, Envelope matrix, Normal form, Horizontal
polar decomposition, Pre-Iwasawa decomposition, Symplectic group,
Time-dependent canonical transformation}
\maketitle

\section{Introduction and main results \label{sec:Introduction}}

We consider the following $2n$-dimensional linear Hamiltonian system
with periodic time-dependent coefficients,
\begin{align}
\dot{z} & =J\nabla H\thinspace,\label{zdot}\\
J & =\left(\begin{array}{cc}
0 & I_{n}\\
-I_{n} & 0
\end{array}\right)\,,\label{J}\\
H & =\frac{1}{2}z^{\dagger}Az\,,\,\,\,A=\left(\begin{array}{cc}
\kappa\left(t\right) & R\left(t\right)\\
R\left(t\right)^{\dagger} & m^{-1}\left(t\right)
\end{array}\right)\,.\label{H}
\end{align}
Here, $z=\left(x_{1},...,x_{n,}\,p_{1},...,p_{n}\right)^{\dagger}$
are the phase space coordinates, $J$ is the $2n\times2n$ standard
symplectic matrix, and $\kappa(t),$ $R\left(t\right),$ and $m^{-1}\left(t\right)$
are periodic time-dependent $n\times n$ matrices with periodicity
$T$. The matrices $\kappa(t)$ and $m^{-1}\left(t\right)$ are symmetric,
and $m(t)$ is also invertible. Supper script ``$\dagger$'' denotes
matrix transpose. 

We are interested in the stability of the dynamics of system \eqref{zdot}
at $t\rightarrow\infty$. The system is called stable if $\lim_{t\rightarrow\infty}z(t)$
is bounded for all initial conditions $z_{0}=z(t=0).$ The solution
of system \eqref{zdot} can be specified by a solution map,
\begin{equation}
z(t)=M(t)z_{0}\,.\label{zmz0}
\end{equation}
A matrix $M$ is called stable if $\lim_{l\rightarrow\infty}M^{l}$
is bounded, which is equivalent to the condition that $M$ is diagonalizable
with all eigenvalues on the unit circle of the complex plane. In terms
of the one-period solution map $M(T)$, system \eqref{zdot} is stable
iff $M(T)$ is stable. 

The main results of the paper are summarized in the following two
theorems. 
\begin{thm}
\label{thm:sol} The solution map of linear Hamiltonian system (\ref{zdot})
can be written as
\begin{equation}
M(t)=S^{-1}P^{-1}S_{0},\label{M}
\end{equation}
where
\begin{align}
S & \equiv\left(\begin{array}{cc}
w^{\dagger-1} & 0\\
(wR-\dot{w})m & w
\end{array}\right)\,,\label{s}\\
S_{0} & \equiv S(t=0)\,,
\end{align}
 $w(t)$ is a time-dependent $n\times n$ matrix satisfying the envelope
equation
\begin{alignat}{1}
\frac{d}{dt}\left(\frac{dw}{dt}m-wRm\right) & +\frac{dw}{dt}mR^{\dagger}+w\left(\kappa-RmR^{\dagger}\right)-\left(w^{\dagger}wmw^{\dagger}\right)^{-1}=0\,,\label{w}
\end{alignat}
and $P(t)\in Sp(2n,\mathbb{R})\cap O(2n,\mathbb{R})$ is determined
by
\begin{alignat}{1}
\dot{P}=P\left(\begin{array}{cc}
0 & -\mu\\
\mu & 0
\end{array}\right)\,,\label{P}\\
\mu=\left(wmw^{\dagger}\right)^{-1}\,.
\end{alignat}
\end{thm}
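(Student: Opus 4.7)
I plan to verify directly that $\tilde M(t) := S(t)^{-1} P(t)^{-1} S_0$ equals the fundamental matrix $M(t)$. Since $M$ is uniquely characterized by $M(0)=I$ and $\dot M = JAM$, it is enough to check these two identities for $\tilde M$. The initial condition is immediate because $P(0)=I$ and $S(0)=S_0$. For the ODE, differentiating $\tilde M$ and using $\dot P = P\Omega$ with $\Omega := \begin{pmatrix} 0 & -\mu \\ \mu & 0 \end{pmatrix}$, together with $\dot{S^{-1}} = -S^{-1}\dot S S^{-1}$, gives $\dot{\tilde M} = -S^{-1}(\dot S + \Omega S)\tilde M$. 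Hence $\dot{\tilde M} = JA\tilde M$ is equivalent to the single ``master'' matrix identity
\[
\dot S + \Omega S + SJA = 0.
\]

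\textbf{Block-by-block verification.} Using the explicit block form of $S$, I expand this master identity into four $n\times n$ equations. The $(2,2)$ block collapses to the tautology $\dot w = \dot w$. The $(1,2)$ block reduces to $\mu w = w^{\dagger-1} m^{-1}$, which is automatic from $\mu = (wmw^\dagger)^{-1}$. The $(2,1)$ block, after expanding the time derivative of $b := (wR-\dot w)m$, coincides term-by-term with the envelope equation \eqref{w} and therefore holds by hypothesis on $w$. The remaining $(1,1)$ block, after cancelling factors of $w$ and $w^\dagger$, is equivalent to the symmetry of $\Sigma := (wR - \dot w)m w^\dagger$, which in turn is the last non-trivial condition for $S^\dagger J S = J$.

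\textbf{Symplecticity as a first integral.} To close the argument I show that under \eqref{w} the quantity $\Sigma - \Sigma^\dagger$ is conserved. Isolating $\dot b = \dot w m R^\dagger + w\kappa - wRmR^\dagger - \mu w^{\dagger-1}$ from the envelope equation and substituting into $\dot\Sigma = \dot b\,w^\dagger + b\,\dot w^\dagger$, one checks that each of the six resulting terms is either self-transpose or paired with its transpose, so $\dot\Sigma = \dot\Sigma^\dagger$. The ``suitable initial condition'' on $(w(0),\dot w(0))$ is thus the requirement that $\Sigma(0)$ be symmetric, ensuring $S_0 \in Sp(2n,\mathbb{R})$; this property is propagated by the envelope flow, so $S(t) \in Sp(2n,\mathbb{R})$ for all $t$ and the $(1,1)$ block identity holds. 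The assertion $P(t) \in Sp(2n,\mathbb{R}) \cap O(2n,\mathbb{R})$ is immediate from $P(0) = I$ and $\Omega \in \mathfrak{sp}(2n,\mathbb{R}) \cap \mathfrak{o}(2n,\mathbb{R})$.

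\textbf{Main obstacle.} The main technical obstacle is the bookkeeping of the block expansion, especially the $(2,1)$ block, which is second order in $w$ and must be matched exactly against the form of \eqref{w}. Conceptually, the theorem is saying that the envelope equation is engineered precisely so that the canonical change of variables $\tilde z = S z$ turns the original Hamiltonian into the rotational normal form $\tilde H = \tfrac12 \tilde z^{\dagger}\,\mathrm{diag}(\mu,\mu)\,\tilde z$, while the infinitesimal symplecticity condition on $S$ is simultaneously promoted to a first integral of the envelope dynamics.
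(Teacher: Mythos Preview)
Your proof is correct and is essentially the paper's argument recast as a direct verification: your master identity $\dot S + \Omega S + SJA = 0$ is exactly the paper's transformation equation $\dot S = J\tilde A S - SJA$ (since $J\tilde A = -\Omega$), your block-by-block expansion reproduces the paper's four block equations for $S$, and your conservation of $\Sigma - \Sigma^{\dagger}$ is the content of the paper's lemma that the $S$-flow preserves symplecticity. The only difference is expository---the paper \emph{derives} the decomposition by seeking a time-dependent canonical transformation to the target Hamiltonian $\tilde H = \tfrac12\,\tilde z^{\dagger}\,\mathrm{diag}(\mu,\mu)\,\tilde z$ and then a second transformation $P$ killing it, whereas you start from the stated formula and check $\tilde M(0)=I$, $\dot{\tilde M}=JA\tilde M$---but the computations coincide line for line.
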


\begin{thm}
\label{thm:stable} The linear Hamiltonian system \eqref{zdot} is
stable iff the envelope equation (\ref{w}) admits a solution $w(t)$
such that $S_{0}$ is symplectic and $|w|\equiv\sqrt{w^{\dagger}w}$
is periodic with periodicity $T.$
\end{thm}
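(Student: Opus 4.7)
The plan is to use Theorem~\ref{thm:sol} and treat $S(t)^{-1}$ as a time-dependent canonical transformation. A preliminary observation is that $S_0$ symplectic forces $S(t)$ to be symplectic for all $t$: the identity $S = P^{-1} S_0 M^{-1}$, obtained by rearranging $M = S^{-1} P^{-1} S_0$, expresses $S$ as a product of symplectic factors. Hence $z = S(t)^{-1}\bar z$ is a bona fide canonical transformation, and the transformed one-period map is $\bar M(T) = P(T)^{-1} \in Sp(2n,\mathbb{R}) \cap O(2n,\mathbb{R})$, which is automatically stable. Stability of $M(T)$ therefore reduces to showing that the monodromy factor $Q \equiv S(T)S_0^{-1}$ lies in $Sp(2n,\mathbb{R}) \cap O(2n,\mathbb{R})$, since then $M(T) = S_0^{-1} Q^{-1} P(T)^{-1} S_0$ is symplectically similar to the orthogonal symplectic matrix $Q^{-1}P(T)^{-1}$.

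For sufficiency I would upgrade periodicity of $|w|$ to quasi-periodicity of $w$ itself. Write the polar decomposition $w=U|w|$ with $U$ orthogonal. The symplecticity of $S$, equivalently the symmetry of $(wR-\dot w)mw^{\dagger}$, rearranges into $\dot w m w^{\dagger} - w m \dot w^{\dagger} = w(Rm - mR^{\dagger})w^{\dagger}$. Substituting $w=U|w|$ and conjugating by $U$ turns this into a Sylvester-type algebraic equation $\{A,|w|m|w|\} = |w|(Rm - mR^{\dagger})|w| - (\dot{|w|}m|w| - |w|m\dot{|w|})$ for $A := U^{\dagger}\dot U$, where $\{\cdot,\cdot\}$ denotes the anticommutator. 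Because anticommutation with a nondegenerate symmetric matrix is invertible on antisymmetric matrices, $A$ is determined algebraically by $|w|$, $\dot{|w|}$, $R$, and $m$. If $|w|$ is $T$-periodic, so is $A$, and Floquet theory applied to the linear equation $\dot U = UA$ yields $U(t+T) = V\,U(t)$ with a constant $V \in O(n,\mathbb{R})$; then $w(t+T) = V\,w(t)$ and $S(t+T) = \mathrm{diag}(V,V)\,S(t)$, so $Q = \mathrm{diag}(V,V) \in Sp\cap O$ and $M(T)$ is stable.

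For necessity, if $M(T)$ is stable then there is a positive-definite symplectic matrix $K_0$ with $M(T)^{\dagger} K_0 M(T) = K_0$. Defining $K(t) := M(t)^{-\dagger} K_0 M(t)^{-1}$ and using $M(t+T) = M(t) M(T)$, a consequence of the $T$-periodicity of the coefficients, shows that $K(t)$ is itself $T$-periodic, positive definite, and symplectic. I would then factor $K(t) = S(t)^{\dagger} S(t)$ using the horizontal polar (pre-Iwasawa) decomposition, choosing $S(t)$ symplectic and of the block-triangular form of Theorem~\ref{thm:sol}; this makes $S_0$ symplectic and extracts initial data $(w(0),\dot w(0))$. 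Since the lower-right block of $K$ equals $w^{\dagger} w = |w|^2$, the $T$-periodicity of $K$ descends to $|w|$.

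The main obstacle is the Sylvester reduction in the sufficiency step, specifically deriving the algebraic equation that expresses $A = U^{\dagger}\dot U$ as a function of $|w|$, $\dot{|w|}$, $R$, $m$ from the symplecticity of $S$, and checking that anticommutation by $|w|m|w|$ is invertible on antisymmetric matrices (which uses the nondegeneracy of $m$ so that $|w|m|w|$ has a definite signature). Once this algebraic link from $|w|$ to the full orthogonal gauge $U$ is in place, the rest of the argument propagates via Floquet theory and the canonical-transformation framework from Theorem~\ref{thm:sol}.
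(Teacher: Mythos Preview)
Your sufficiency argument runs parallel to the paper's: the paper applies the orthogonal gauge $c_t=|w|(t)\,w(t)^{-1}$ at $t=0$ and $t=T$ and asserts $\tilde S_T=\tilde S_0$, which is exactly your statement $S(T)=\mathrm{diag}(V,V)\,S(0)$ with $V\in O(n)$. Your Sylvester--Floquet step is in fact more explicit than what the paper writes, since it explains \emph{why} periodicity of $|w|$ forces $w^{-1}\dot w$ to be periodic (this is what makes the $(2,1)$ blocks of $\tilde S_T$ and $\tilde S_0$ agree). But your invertibility claim is wrong as stated: nondegeneracy of $m$ does \emph{not} give $|w|m|w|$ a definite signature. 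In an eigenbasis of $B=|w|m|w|$ the equation $\{A,B\}=C$ reads $(b_i+b_j)A_{ij}=C_{ij}$, and for indefinite $m$ one may have $b_i+b_j=0$, killing uniqueness. You need $m$ definite, which the paper does not assume; if the theorem is meant to hold for merely invertible $m$, this route does not close.

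The serious gap is in necessity. Your opening assertion---that a stable $M(T)$ admits a positive-definite \emph{symplectic} invariant $K_0$---is the entire difficulty. Writing $K_0=F^\dagger F$ with $F\in Sp(2n,\mathbb R)$, the invariance $M(T)^\dagger K_0 M(T)=K_0$ says $FM(T)F^{-1}\in O(2n)$, so you are assuming that every stable symplectic matrix is symplectically similar to an orthogonal one. That is precisely Theorem~\ref{thm:normal}, which the paper proves in a full section via Krein signatures and $G$-orthogonal eigenbases; ordinary Lyapunov theory yields a positive-definite invariant but not a symplectic one. Once Theorem~\ref{thm:normal} is granted, the paper's route is also cleaner than yours: it horizontally polar-decomposes $F=P_F S_F$, chooses envelope initial data so that $S_0=S_F$, and then invokes \emph{uniqueness} of the horizontal polar decomposition (Theorem~\ref{thm:horizontal}) of $S_0 M(T)^{-1}$ to force $\tilde S_T=S_F=\tilde S_0$. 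In your sketch the analogous link is missing: you factor $K(t)=S(t)^\dagger S(t)$ pointwise, but never verify that this $S(t)$ coincides (up to the $O(n)$ gauge) with the $S(t)$ generated by the envelope flow from the extracted initial data, which is what the theorem actually requires.
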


The solution map given in Theorem \ref{thm:sol} is constructed using
a time-dependent canonical transformation method. Note that Theorem
\ref{thm:sol} is valid regardless whether the coefficients of the
Hamiltonian are periodic or not. Techniques of normal forms for stable
symplectic matrices and horizontal polar decomposition for symplectic
matrices are developed to prove Theorem \ref{thm:stable}. The results
and techniques leading to Theorem \ref{thm:sol} have been reported
previously in the context of charged particle dynamics in a general
focusing lattice \citep{Qin09-NA,Qin09PoP-NA,Qin09-PRL,Qin10PRL,Chung10,Qin11-056708,Qin13PRL,Qin13PRL2,Chung13,Qin14-044001,Qin15-056702,Chung15,Chung16,Chung16PRL,Chung18}.
These contents are included here for easy reference and self-consistency. 

In Sec.~\ref{sec:Significance}, we will discuss the significance
of the main results and their applications in physics. Section \ref{sec:time-dep}
describes the method of time-dependent canonical transformation, and
Sec.~\ref{sec:Proofs} is devoted to the construction of the solution
map as given in Theorem \ref{thm:sol}. The normal forms of stable
symplectic matrices are presented in Sec.~\ref{sec:Normal}, and
the horizontal polar decomposition of symplectic matrices is introduced
in Sec.~\ref{sec:Horizontal}. The proof of Theorem \ref{thm:stable}
is completed in Sec.~\ref{sec:Proof 2}. 

\section{Significance and applications \label{sec:Significance}}

Linear Hamiltonian systems with time-dependent coefficients \eqref{zdot}
has many important applications in physics and mathematics. In accelerator
physics, it describes charged particle dynamics in a periodic focusing
lattice \citep{Davidson01-all}. The dynamic properties, especially
the stability properties, of the system to a large degree dictate
the designs of beam transport systems and storage rings for modern
accelerators. In the canonical quantization approach for quantum field
theory, Schrödinger's equation in the interaction picture, which is
the starting point of Dyson's expansion, S-matrix and Feynman diagrams,
assumes the form of a linear Hamiltonian system with time-dependent
coefficients. For nonlinear Hamiltonian dynamics, nonlinear periodic
orbit is an important topic. The stability of nonlinear periodic orbits
are described by linear Hamiltonian systems with periodic coefficients.

To illustrate the significance of the main results of this paper,
we look at the implications of Theorems \ref{thm:sol} and \ref{thm:stable}
for the special case of one degree of freedom with $m=1$, for which
Hamilton's equation (\ref{zdot}) reduces to the harmonic oscillator
equation with a periodic spring constant, 
\begin{equation}
\ddot{x}+\kappa(t)x=0\,.\label{harmonic}
\end{equation}
According to Theorem \ref{thm:sol}, the solution of Eq.\,(\ref{harmonic})
is
\begin{align}
\left(\begin{array}{c}
x\\
\dot{x}
\end{array}\right) & =M(t)\left(\begin{array}{c}
x\\
\dot{x}
\end{array}\right)_{0}\,.\\
M\left(t\right) & =S^{-1}\left(\begin{array}{cc}
\cos\phi & \sin\phi\\
-\sin\phi & \cos\phi
\end{array}\right)S_{0}\,,\label{M1}\\
S & =\left(\begin{array}{cc}
w^{-1} & 0\\
-\dot{w} & w
\end{array}\right)\,,\\
\phi\left(t\right) & =\int_{0}^{t}\dfrac{dt}{w^{2}}\,,
\end{align}
where the scalar envelope function $w\left(t\right)$ satisfies the
nonlinear envelope equation
\begin{equation}
\ddot{w}+\kappa\left(t\right)w=w^{-3}\,.\label{w1}
\end{equation}

This solution for Eq.\,(\ref{harmonic}) and the scalar envelope
equation (\ref{w1}) were discovered by Courant and Snyder \citep{Courant58}
in the context of charged particle dynamics in one-dimensional periodic
focusing lattices. The solution map given by Courant and Snyder \citep{Courant58}
is 

\begin{equation}
M\left(t\right)=\left(\begin{array}{cc}
\sqrt{\dfrac{\beta}{\beta_{0}}}\left[\cos\phi+\alpha_{0}\sin\phi\right] & \sqrt{\beta\beta_{0}}\sin\phi\\
-\dfrac{1+\alpha\alpha_{0}}{\sqrt{\beta\beta_{0}}}\sin\phi+\dfrac{\alpha_{0}-\alpha}{\sqrt{\beta\beta_{0}}}\cos\phi & \sqrt{\dfrac{\beta_{0}}{\beta}}\left[\cos\phi-\alpha\sin\phi\right]
\end{array}\right)\,,
\end{equation}
where the $\alpha$ and $\beta$ are time-dependent functions defined
as $\alpha\left(t\right)=-w\dot{w}$ and $\beta\left(t\right)=w^{2}\left(t\right),$
and $\alpha_{0}$ and $\beta_{0}$ are initial conditions at $t=0.$
It equals the $M$ in the three-way splitting form in Eq.\,(\ref{M1}). 

The scalar $w(t)$ is called the envelope because it encapsulates
the slow dynamics of the envelope for the fast oscillation, when the
variation time-scale of $\kappa(t)$ is slow compared with the period
determined by $\kappa(t)$, i.e, 
\begin{equation}
\left|\frac{d\kappa}{\kappa dt}\right|\ll\sqrt{\left|k\right|}\,.
\end{equation}
Since the solution map gives the solution of the dynamics in terms
of $w$, invariants of the dynamics can also be constructed. For example,
the Courant-Synder invariant \citep{Courant58} is 
\begin{equation}
I_{CS}=\dfrac{x^{2}}{w^{2}}+\left(w\dot{x}-\dot{w}x\right)^{2}\,,\label{csi}
\end{equation}
which was re-discovered by Lewis in classical and quantum settings
\citep{Lewis68,Lewis69} . The envelope equation is also known as
the Ermakov-Milne-Pinney equation \citep{Ermakov80,Milne30,Pinney50},
which has been utilized to study 1D time-dependent quantum systems
\citep{Lewis68,Lewis69,Morales88,Monteoliva94} and associated non-adiabatic
Berry phases \citep{Berry85}. Given that harmonic oscillator is the
most important physics problem, it comes as no surprise that Eq.\,(\ref{harmonic})
had been independently examined many times from the same or different
angles over the history \citep{Qin06Sym}. 

When specialized to the system with one degree of freedom (\ref{harmonic}),
Theorem \ref{thm:stable} asserts that the dynamics is stable iff
the envelope equation (\ref{w1}) admits a periodic solution with
periodicity $T$. The theorem emphasizes again the crucial role of
the envelope $w$ in determining the dynamic properties of the system.
The sufficiency is straightforward to establish. If $w(t)$ is periodic
with $w(T)=w(0),$ then the one-period solution map is
\begin{align}
M\left(T\right) & =S_{0}^{-1}\left(\begin{array}{cc}
\cos\phi & \sin\phi\\
-\sin\phi & \cos\phi
\end{array}\right)S_{0}\,.\label{M1m}
\end{align}
 Thus,
\begin{align}
M^{l}\left(T\right) & =S_{0}^{-1}R^{l}(\phi)S_{0}\,,\\
R(\phi) & =\left(\begin{array}{cc}
\cos\phi & \sin\phi\\
-\sin\phi & \cos\phi
\end{array}\right)\,,
\end{align}
and $\lim_{l\rightarrow\infty}M^{l}(T)$ is bounded, since $R^{l}(\phi)=R(l\phi)$
is a rotation in $\mathbb{R}^{2}$. The proof of sufficiency makes
use of the splitting of $M$ in the form Eq.\,(\ref{M1}), which
is the special case of Eq.\,(\ref{M}) for one degree of freedom.

The necessity of Theorem \ref{thm:stable} is difficult to establish,
even for one degree of freedom. Actually, its proof for one degree
of freedom is almost the same as that for higher dimensions, which
is given in Sec.~\ref{sec:Proof 2}. For this purpose, two utilities
are developed. The first pertains to the normal forms of stable symplectic
matrices. It is given by Theorem \ref{thm:normal}, which states that
a stable real symplectic matrix is similar to a direct sum of elements
of $SO(2,\mathbb{R})$ by a real symplectic matrix. Given the fact
that normal forms for symplectic matrices had been investigated by
different authors \citep{Long02,Williamson37,Burgoyne74,Laub74,Wimmer91,Hoermander95,DragtBook},
Theorem \ref{thm:normal} might have been known previously. However,
I have not been able to find it in the literature. A detailed proof
of Theorem \ref{thm:normal} is thus written out in Sec.~\ref{sec:Normal}
for easy reference and completeness. The second utility developed
is the horizontal polar decomposition of symplectic matrices, given
by Theorem \ref{thm:horizontal}. Akin to the situation of normal
forms, Theorem \ref{thm:horizontal} is built upon previous work,
especially that by de Gosson \citep{deGosson06} and Wolf \citep{Wolf04-173}.
The presentation in this paper clarifies some technical issues and
confusions in terminology.

In the present study, Theorems \ref{thm:normal} and \ref{thm:horizontal}
are presented as tools to prove Theorem \ref{thm:stable}. However,
we promote their status to theorems for the importance of their own.
They can be applied to study other problems as well, such as quantum
algorithms for classical physics. 

\section{Method of time-dependent canonical transformation \label{sec:time-dep}}

In this section, we present the method of time-dependent canonical
transformation in preparation for the proof of Theorem \ref{thm:sol}
in the next section. It is necessary to emphasize again that the results
and techniques leading to Theorem \ref{thm:sol} have been reported
previously in the context of charged particle dynamics in a general
focusing lattice \citep{Qin09-NA,Qin09PoP-NA,Qin09-PRL,Qin10PRL,Chung10,Qin11-056708,Qin13PRL,Qin13PRL2,Chung13,Qin14-044001,Qin15-056702,Chung15,Chung16,Chung16PRL,Chung18}.
These contents are included here for easy reference and self-consistency. 

We introduce a time-dependent linear canonical transformation \citep{Leach77}
\begin{equation}
\tilde{z}=S\left(t\right)z\,,\label{zsz}
\end{equation}
 such that in the new coordinate $\tilde{z},$ the transformed Hamiltonian
has the form
\begin{equation}
\tilde{H}=\dfrac{1}{2}\tilde{z}^{\dagger}\tilde{A}\left(t\right)\tilde{z}\,,\label{Hbar}
\end{equation}
 where $\tilde{A}\left(t\right)$ is a targeted symmetric matrix.
The transformation between $z$ and $\tilde{z}$ is canonical, 
\begin{equation}
\dfrac{\partial\tilde{z}_{j}}{\partial z_{k}}J_{kl}\dfrac{\partial\tilde{z}_{j}}{\partial z_{l}}=J_{ij}\,\,\,\textrm{or}\,\,\,SJS^{\dagger}=J\,.\label{zjzj}
\end{equation}
The matrix $S\left(t\right)$ that renders the time-dependent canonical
transformation needs to satisfy a differential equation derived as
follows. With the quadratic form of the Hamiltonian in Eq.\,\eqref{H},
Hamilton's equation\,\eqref{zdot} becomes 
\begin{align}
\dot{z}_{j} & =J_{ij}\dfrac{\partial H}{\partial z_{j}}=\dfrac{1}{2}J_{ij}\left(\delta_{lj}A_{lm}z_{m}+z_{l}A_{lk}\delta_{kj}\right)=J_{ij}A_{jm}z_{m}\,,\label{zdot-ind}
\end{align}
or
\begin{equation}
\dot{z}=JAz\,.\label{zd}
\end{equation}
 Because we require that in $\tilde{z}$ the transformed Hamiltonian
is given by Eq.\,(\ref{Hbar}), the following equation holds as well,
\begin{equation}
\dot{\tilde{z}}=J\tilde{A}\tilde{z}.\label{zdja}
\end{equation}
 Using Eq.\,\eqref{zsz}, we rewrite Eq.\,\eqref{zdja} as 
\begin{equation}
\dot{\tilde{z}}=J\tilde{A}\tilde{z}=J\tilde{A}Sz\,.\label{zbd}
\end{equation}
 Meanwhile, $\dot{\tilde{z}}$ can be directly calculated from Eq.\,(\ref{zsz})
by taking a time-derivative, 
\begin{equation}
\dot{\tilde{z}}=\dot{S}z+S\dot{z}=\left(\dot{S}+SJA\right)z\,.\label{zbd2}
\end{equation}
 Combining Eqs.\,(\ref{zbd}) and (\ref{zbd2}) gives the differential
equation for $S,$
\begin{equation}
\dot{S}=J\tilde{A}S-SJA\,.\label{S}
\end{equation}

\begin{lem}
\label{lem:Seq}The solution $S$ of Eq.\,\eqref{S} is always symplectic,
if $S$ is symplectic at $t=0$. 
\end{lem}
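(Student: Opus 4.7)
The natural approach is to differentiate the symplectic product $SJS^{\dagger}$ along the flow defined by Eq.\,(\ref{S}) and show that $J$ is a fixed point of the resulting matrix ODE, so that uniqueness for linear ODEs forces $S(t)JS(t)^{\dagger}=J$ for all $t$ whenever $S(0)JS(0)^{\dagger}=J$.

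Concretely, I would set $C(t)\equiv S(t)JS(t)^{\dagger}$ and compute $\dot C = \dot S J S^{\dagger} + S J \dot S^{\dagger}$, substituting $\dot S = J\tilde A S - SJA$ from Eq.\,(\ref{S}) and, for $\dot S^{\dagger}$, using the fact that $\tilde A$ and $A$ are symmetric while $J^{\dagger}=-J$. After the substitution the two ``$SJAJS^{\dagger}$'' cross terms cancel, and what remains collapses to a linear, homogeneous expression in $C$:
\begin{equation}
\dot C = J\tilde A\, C - C\,\tilde A\, J.
\end{equation}

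Now comes the key observation: $C=J$ identically is itself a solution of this ODE, since $J\tilde A J - J\tilde A J = 0$. Therefore, if $C(0)=J$, uniqueness of the initial-value problem for this linear matrix ODE (its coefficients are smooth in $t$ under our standing assumptions on $\kappa,R,m^{-1}$ and on $\tilde A$) forces $C(t)=J$ for all $t$, which is exactly the symplectic condition $SJS^{\dagger}=J$.

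I do not anticipate a serious obstacle; the only mildly delicate step is keeping the transposes straight so as to see the cancellation of the terms containing $A$, and making the uniqueness argument rigorous requires only the standard Picard--Lindel\"of theorem applied to the linear flow $\dot C = J\tilde A C - C\tilde A J$ on the space of $2n\times 2n$ matrices. No appeal to further structure (e.g.\ the particular form of $S$ in Eq.\,(\ref{s})) is needed, so the lemma holds for any symmetric $A(t)$ and $\tilde A(t)$.
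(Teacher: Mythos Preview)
Your proposal is correct and follows essentially the same route as the paper's first proof: define $K=SJS^{\dagger}$, differentiate along Eq.\,(\ref{S}), observe that the $A$-terms cancel to leave the linear ODE $\dot K = J\tilde A K - K\tilde A J$ (the paper inadvertently inserts an overall factor of $2$), and conclude by uniqueness from the fixed point $K=J$. The paper additionally sketches a second, geometric argument---noting that $J\tilde A,\,JA\in sp(2n,\mathbb{R})$ so the right-hand side of Eq.\,(\ref{S}) is tangent to $Sp(2n,\mathbb{R})$ at $S$---but your analytic proof already suffices.
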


\begin{proof}
We follow Leach \citep{Leach77} and consider the dynamics of the
matrix $K\equiv SJS^{\dagger},$ 
\begin{align}
\dot{K} & =\dot{S}JS^{\dagger}+SJ\dot{S}^{\dagger}\nonumber \\
 & =2\left[\left(J\tilde{A}S-SJA\right)JS^{\dagger}+SJ\left(-S\tilde{A}J+AJS^{\dagger}\right)\right]\nonumber \\
 & =2\left[J\tilde{A}SJS^{\dagger}-SJS^{\dagger}\tilde{A}J\right]=2\left[J\tilde{A}K-K\tilde{A}J\right]\,.\label{Kdot}
\end{align}
 The dynamics of $K$ has a fixed point at $K=J.$ If $S(t=0)$ is
symplectic, \textit{i.e.}, $K\left(t=0\right)=J,$ then $K=J$ for
all $t$, and thus $S$ is symplectic for all $t.$ 

A more geometric proof can be given from the viewpoint of the flow
of $S$ (see Fig.\,\ref{Sp}). Because $\tilde{A}$ is symmetric,
$JJ\tilde{A}-\tilde{A}^{\dagger}JJ=0$ and $J\tilde{A}\in sp\left(2n,\mathbb{R}\right)$,
the Lie algebra of $Sp\left(2n,\mathbb{R}\right).$ If $S\in Sp\left(2n,\mathbb{R}\right)$
at a given $t,$ then $J\tilde{A}S$ is in the tangent space of $Sp\left(2n,\mathbb{R}\right)$
at $S$, \textit{i.e.}, $J\tilde{A}S\in T_{S}SP\left(2n,\mathbb{R}\right).$
This can be seen by examining the Lie group right action 
\begin{equation}
S:\,a\mapsto aS
\end{equation}
for any $a$ in $Sp\left(2n,\mathbb{R}\right),$ and the associated
tangent map 
\begin{equation}
T_{S}:\ T_{a}Sp\left(2n,\mathbb{R}\right)\rightarrow T_{aS}Sp\left(2n,\mathbb{R}\right).
\end{equation}
It is evident that $J\tilde{A}S$ is the image of the Lie algebra
element $J\tilde{A}$ under the tangential map $T_{S},$ which means
that $J\tilde{A}S$ is a ``vector”\ tangential to the space of $Sp\left(2n,\mathbb{R}\right)$
at $S.$ The same argument applies to $SJA$ as well. Consequently,
the right hand side of Eq.\,(\ref{S}) is a vector field on $Sp\left(2n,\mathbb{R}\right)$.
The $S$ dynamics will stay on the space of $Sp\left(2n,\mathbb{R}\right)$,
if it does at $t=0.$
\end{proof}
\begin{figure}[ptb]
\begin{centering}
\includegraphics[width=3in]{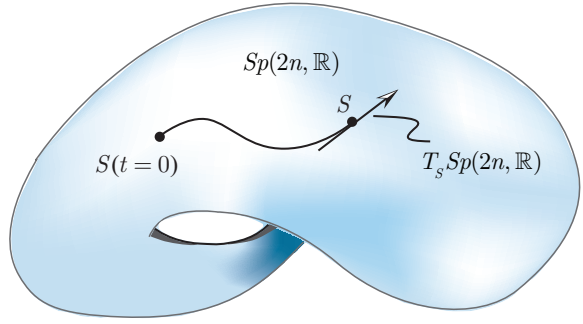} 
\par\end{centering}
\caption{The space of the symplectic group $Sp\left(2n,\mathbb{R}\right)$
and the flow of $S$ on $Sp\left(2n,\mathbb{R}\right).$ At any given
time, $J\tilde{A}S-SJA$ is tangential to $Sp\left(2n,\mathbb{R}\right),$
and the flow of $S$ according to Eq.\,\eqref{S} is always on $Sp\left(2n,\mathbb{R}\right).$ }

\label{Sp} 
\end{figure}

\section{Proof of Theorem 1 \label{sec:Proofs}}

We now apply the technique developed in Sec. \ref{sec:time-dep} to
prove Theorem \ref{thm:sol}. Our goal is to find a new coordinate
system where the transformed Hamiltonian vanishes. This idea is identical
to that in Hamilton-Jacobi theory. The goal is accomplished in two
steps. First, we seek a coordinate transformation $\tilde{z}=Sz$
such that, in the $\tilde{z}$ coordinates, the Hamiltonian assumes
the form 
\begin{equation}
\bar{H}=\frac{1}{2}\tilde{z}^{\dagger}\tilde{A}\tilde{z}\,,\,\,\tilde{A}=\left(\begin{array}{cc}
\mu(t) & 0\\
0 & \mu(t)
\end{array}\right)\,,
\end{equation}
where $\mu(t)$ is a $n\times n$ matrix to be determined. Let the
$2\times2$ block form of $S$ is 
\begin{equation}
S=\left(\begin{array}{cc}
S_{1} & S_{2}\\
S_{3} & S_{4}
\end{array}\right),
\end{equation}
and split Eq.\,\eqref{S} into four $n\times n$ matrix equations,
\begin{align}
\dot{S}_{1} & =\mu S_{3}-S_{1}R^{\dagger}+S_{2}\kappa\,\,,\label{eq:S1-1}\\
\dot{S}_{2} & =\mu S_{4}-S_{1}m^{-1}+S_{2}R\,,\label{eq:S2-1}\\
\dot{S}_{3} & =-\mu S_{1}-S_{3}R^{\dagger}+S_{4}\kappa\,,\\
\dot{S}_{4} & =-\mu S_{2}-S_{3}m^{-1}+S_{4}R\,.\label{eq:S4-1}
\end{align}
Including $\mu(t),$ there are five $n\times n$ matrices unknown.
The extra freedom is introduced by the to-be-determined $\mu(t)$.
We choose $S_{2}\equiv0$ to remove the freedom, and rename $S_{4}$
to be $w,$ i.e., $w\equiv S_{4}$. Equations \eqref{eq:S1-1}-\eqref{eq:S4-1}
become
\begin{align}
\dot{S}_{1} & =\mu S_{3}-S_{1}R^{\dagger}\,,\label{S1}\\
S_{1} & =\mu wm\,,\label{S2}\\
\dot{S}_{3} & =-\mu S_{1}-S_{3}R^{\dagger}+w\kappa\,,\label{S3}\\
S_{3} & =-\dot{w}m+wRm,\label{S4}
\end{align}
for matrices $S_{1}$, $S_{3}$, $w$ and $\mu$. Because $(S_{1},S_{2}=0,S_{3},S_{4}=w)$
describes a curve in $Sp\left(2n,\mathbb{R}\right),$ symplectic condition
$S_{1}S_{4}^{\dagger}-S_{2}S_{3}^{\dagger}=I$ holds, which implies
\begin{equation}
S_{1}=w^{\dagger-1}\,.\label{eq:S1}
\end{equation}
From Eq.\,\eqref{S2}, we have
\begin{equation}
\mu=\left(wmw^{\dagger}\right)^{-1}.\label{mu}
\end{equation}
Equation \eqref{S1} is equivalent to another symplectic condition
$S_{3}S_{4}^{\dagger}=S_{4}S_{3}^{\dagger}.$ Substituting Eqs.\,\eqref{S4}-\eqref{mu}
into Eq.\,\eqref{S3}, we obtain the following matrix differential
equation for the envelope matrix $w,$
\begin{equation}
\frac{d}{dt}\left(\frac{dw}{dt}m-wRm\right)+\frac{dw}{dt}mR^{\dagger}+w\left(\kappa-RmR^{\dagger}\right)-\left(w^{\dagger}wmw^{\dagger}\right)^{-1}=0\,.\label{w-1}
\end{equation}
This is the desired envelope equation in Theorem \ref{thm:sol}. 

Once $w$ is solved for from the envelope equation, we can determine
$S_{1}$ from Eq.\,\eqref{eq:S1} and $S_{3}$ from Eq.\,\eqref{S4}.
In terms of the envelope matrix $w,$ the symplectic transformation
$S$ and its inverse are given by
\begin{align}
S & =\left(\begin{array}{cc}
w^{\dagger-1} & 0\\
(wR-\dot{w})m & w
\end{array}\right)\,,\label{eq:S=00003D}\\
S^{-1} & =\left(\begin{array}{cc}
w^{\dagger} & 0\\
\left(w^{-1}\dot{w}-R\right)mw^{\dagger} & w^{-1}
\end{array}\right)\,.\label{eq:S-1=00003D}
\end{align}

The second step is to use another coordinate transformation $\tilde{\tilde{z}}=P(t)\tilde{z}\,$
to transform $\tilde{H}$ into a vanishing Hamiltonian $\tilde{\tilde{H}}\equiv0$
at all time, thereby rendering the dynamics trivial in the new coordinates.
The determining equation for the transformation $P(t)$ is 
\begin{equation}
\dot{P}=-PJ\tilde{A}=P\left(\begin{array}{cc}
0 & -\mu\\
\mu & 0
\end{array}\right)\,.\label{P-1}
\end{equation}
According to Lemma \ref{lem:Seq}, the $P$ matrix satisfying Eq.\,\eqref{P-1}
is symplectic because $J\tilde{A}\in sp(2n,\mathbb{R}).$ From $\mu=\mu^{\dagger}$,
we know that $J\tilde{A}$ is also antisymmetric, i.e., $J\tilde{A}\in so(2n,\mathbb{R})$.
Thus $J\tilde{A}\in sp(2n,\mathbb{R})\cap o(2n,\mathbb{R})$, and
$P(t)$ is a curve in the group of $2n$-dimensional symplectic rotations,
i.e., $P(t)\in Sp(2n,\mathbb{R})\cap O(2n,\mathbb{R})\backsimeq U(n)$,
provided $P(t)$ starts from the group at $t=0.$ We call $P(t)$
the phase advance, an appropriate descriptor in light of the fact
that $P(t)$ is a symplectic rotation. The Lie algebra element (infinitesimal
generator) $-J\tilde{A}=\left(\begin{array}{cc}
0 & -\mu\\
\mu & 0
\end{array}\right)$ is the phase advance rate, and it is determined by the envelope matrix
through Eq.\,\eqref{mu}. As an element in $Sp(2n,\mathbb{R})\cap O(2n,\mathbb{R})\backsimeq U(n)$,
$P$ and its inverse must have the forms
\begin{align}
P & =\left(\begin{array}{cc}
P_{1} & P_{2}\\
-P_{2} & P_{1}
\end{array}\right)\,,\label{eq:P=00003D}\\
P^{-1} & =P^{\dagger}=\left(\begin{array}{cc}
P_{1}^{\dagger} & -P_{2}^{\dagger}\\
P_{2}^{\dagger} & P_{1}^{\dagger}
\end{array}\right)\,.\,\,\label{eq:P-1=00003D}
\end{align}

Combining the two time-dependent canonical transformations, we have
\begin{equation}
\tilde{\tilde{z}}=G(t)z=P(t)S(t)z\,.\label{G}
\end{equation}
In the $\tilde{\tilde{z}}$ coordinates, because $\tilde{\tilde{H}}\equiv0$,
the dynamics is trivial, i.e., $\tilde{\tilde{z}}=const.$ This enables
us to construct the solution map \eqref{zmz0} as
\begin{equation}
M(t)=S^{-1}P^{-1}P_{0}S_{0}=\left(\begin{array}{cc}
w^{\dagger} & 0\\
\left(w^{-1}\dot{w}-R\right)mw^{\dagger} & w^{-1}
\end{array}\right)P^{\dagger}\left(\begin{array}{cc}
w^{-\dagger} & 0\\
(wR-\dot{w})m & w
\end{array}\right)_{0},\label{Md}
\end{equation}
where subscript ``0'' denotes initial conditions at $t=0$, and
$P_{0}$ is taken to be $I$ without loss of generality. 

This completes the proof of Theorem \eqref{thm:sol}.

The time-dependent canonical transformation can also be used to construct
invariants of the dynamics. For any constant $2n\times2n$ positive-definite
matrix $\xi,$ the quantity 
\begin{equation}
I_{\xi}=z^{\dagger}S^{\dagger}P^{\dagger}\xi PSz\label{Ixi}
\end{equation}
is a constant of motion, since $\tilde{\tilde{z}}=PSz$ is a constant
of motion. For the special case of $\xi=I,$ the phase advance $P$
in Eq.\,\eqref{Ixi} drops out, and 
\begin{equation}
I_{CS}\equiv z^{\dagger}S^{\dagger}Sz=z^{\dagger}\left(\begin{array}{cc}
\gamma & \alpha\\
\alpha^{T} & \beta
\end{array}\right)z\,,\label{eq:ICS2}
\end{equation}
where $\alpha,$ $\beta,$ and $\gamma$ are $2\times2$ matrices
defined by
\begin{align}
\alpha & \equiv w^{\dagger}S_{3}\,,\label{eq:alphag}\\
\beta & \equiv w^{\dagger}w\,,\label{eq:betag}\\
\gamma & \equiv S_{3}^{\dagger}S_{3}+w^{-1}w^{-\dagger}\,.\label{eq:gammag}
\end{align}
Here, we use $I_{CS}$ to denote this special invariant because it
is the invariant that generalizes the Courant-Snyder invariant \citep{Courant58}
(or Lewis invariant \citep{Lewis68,Lewis69}) for one degree of freedom
in Eq.\,\eqref{csi}. 

\section{Normal forms for stable symplectic matrices \label{sec:Normal}}

In this section, normal forms for stable symplectic matrices are developed.
We first list necessary definitions and facts related to the eigenvalues,
eigenvector spaces, and root-vector spaces of symplectic matrices.
Lemmas \ref{lem:Sspectrum}-\ref{lem:Kpp} are given without proofs,
which can be found in the textbooks by Long \citep{Long02}, Ekeland
\citep{Ekeland90}, and Yakubovich and Starzhinskii \citep{Yakubovich75}.

The set of eigenvalues of a matrix $M$ is denoted by
\begin{equation}
\sigma(M)=\left\{ \lambda|\det(I\lambda-M)=0\right\} ,
\end{equation}
 and the unit circle on the complex plane $\mathbb{C}$ is denoted
by $U.$
\begin{lem}
\label{lem:Sspectrum} For a symplectic matrix $M\in Sp(2n,\mathbb{R}),$
its eigenvalue space $\sigma(M)$ is symmetric with respect to the
real axis and the unit circle $U$, i.e., if $\lambda_{0}\in\sigma(M),$
then $\bar{\lambda}_{0},$$\lambda_{0}^{-1}\in\sigma(M).$
\end{lem}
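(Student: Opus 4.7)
The statement splits into two independent assertions, and both follow from quick symmetry arguments; I would present them as a two-part plan.

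\textbf{Conjugation symmetry.} Since $M$ is a real $2n\times 2n$ matrix, its characteristic polynomial $p(\lambda)=\det(\lambda I-M)$ has real coefficients, so its complex roots come in conjugate pairs. This gives $\bar\lambda_0\in\sigma(M)$ whenever $\lambda_0\in\sigma(M)$ in a single line, with no use of the symplectic hypothesis.

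\textbf{Inversion symmetry.} Here the symplectic property is what matters. Starting from $MJM^\dagger=J$ (equation \eqref{zjzj}) and using $J^{-1}=-J$, a direct rearrangement gives
\[
M^{-1} \;=\; -J\, M^\dagger\, J,
\]
exhibiting $M^{-1}$ as similar to $M^\dagger$ through the conjugating matrix $J$. Since a matrix and its transpose always share eigenvalues, this yields $\sigma(M^{-1})=\sigma(M^\dagger)=\sigma(M)$. Therefore any $\lambda_0\in\sigma(M)$ also lies in $\sigma(M^{-1})$, which is equivalent to $\lambda_0^{-1}\in\sigma(M)$. One should observe in passing that $0\notin\sigma(M)$, since $\det M=1$ for symplectic $M$ guarantees invertibility, so the inversion is well-defined.

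There is no substantive obstacle here; the only care required is the sign-bookkeeping when deriving $M^{-1}=-JM^\dagger J$ from $MJM^\dagger=J$. Equivalently, one can work at the polynomial level by establishing the reciprocal-polynomial identity $p(\lambda)=\lambda^{2n}\,p(\lambda^{-1})$ via $\det(\lambda I-M)=\det(M)\,\lambda^{2n}\,\det(M^{-1}-\lambda^{-1}I)$ combined with $\det M=1$; this route bypasses the similarity argument but yields the same conclusion. Either way, the lemma reduces to a short exercise in the definition of $Sp(2n,\mathbb{R})$.
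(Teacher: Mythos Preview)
Your proof is correct. Note, however, that the paper does not actually prove this lemma: it explicitly states that Lemmas~\ref{lem:Sspectrum}--\ref{lem:Kpp} ``are given without proofs, which can be found in the textbooks by Long, Ekeland, and Yakubovich and Starzhinskii.'' So there is no paper proof to compare against; your argument supplies the standard short justification that the paper chose to omit.
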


The symmetry of $\sigma(M)$ with respect to the real axis is not
specific to symplectic matrices. It is true for all real matrices.
For a $\lambda\in\sigma(M)$, denote the eigenvector space by $V_{\lambda}(M)$
and the root-vector space by $E_{\lambda}(M)$,

\begin{align}
V_{\lambda}(M) & \equiv\textrm{ker}(M-\lambda I)\subset\mathbb{C}^{2n},\\
E_{\lambda}(M) & \equiv\cup_{k\geq1}\textrm{ker}(M-\lambda I)^{k}\subset\mathbb{C}^{2n}.
\end{align}
The dimension of $V_{\lambda}(M)$ is the geometric multiplicity of
$\lambda,$ denoted by $Mult_{G\lambda}(M)$, and the dimension of
$E_{\lambda}(M)$ is the algebraic multiplicity of $\lambda,$ denoted
by $Mult_{A\lambda}(M)$. A subspace $V\subset\mathbb{C}^{2n}$ is
an invariant subspace if $MV\subset\mathbb{C}^{2n}$. An invariant
subspace is irreducible if it is not a direct sum of two non-trivial
invariant subspaces. For every $\lambda$, the root-vector space $E_{\lambda}(M)$
is an invariant subspace, and a direct sum of irreducible subspaces.
On the other hand, every irreducible subspace of $M$ is contained
in one of the root-vector spaces. An eigenvalue $\lambda$ is simple
if $Mult_{G\lambda}(M)=$$Mult_{A\lambda}(M)=1$. An eigenvalue $\lambda$
is semi-simple, if $E_{\lambda}(M)$ is a direct sum of one-dimensional
irreducible invariant subspaces only, which is equivalent to that
all elementary divisors of $\lambda$ are simple. When $\lambda$
is semi-simple, the eigenvector space $V_{\lambda}(M)$ is at its
maximum dimension. It has fulfilled its obligation to provide enough
eigenvectors for the purpose of diagonalizing $M$, even though $Mult_{G\lambda}(M)=$$Mult_{A\lambda}(M)$
could be larger than $1$. If $M$ is not diagonalizable, other eigenvalues
have to be blamed. 

For any two vectors $\psi$ and $\phi$ in $\mathbb{C}^{2n},$ the
Krein product is 
\begin{equation}
\left\langle \psi,\phi\right\rangle _{G}\equiv\left\langle G\psi,\phi\right\rangle =-\phi^{*}iJ\psi,
\end{equation}
where $G=-iJ$ and $\phi^{*}=\bar{\phi}^{\dagger}$ is the Hermitian
transpose of $\phi.$ Obviously, 
\begin{equation}
\left\langle \psi,\phi\right\rangle _{G}=\overline{\left\langle \phi,\psi\right\rangle _{G}}\,.
\end{equation}
For a vector $\psi$ in $\mathbb{C}^{2n},$ it Krein amplitude is
defined to be $\left\langle \psi,\psi\right\rangle _{G}$, the sign
of which is the Krein signature of $\psi.$ 

Two vectors $\psi$ and $\phi$ in $\mathbb{C}^{2n}$ are G-orthogonal
if $\left\langle \psi,\phi\right\rangle _{G}=0.$ Two subspaces $V_{1}$
and $V_{1}$ are G-orthogonal if $\left\langle \psi_{1},\psi_{2}\right\rangle _{G}=0$
for any $\psi_{1}\in V_{1}$ and $\psi_{2}\in V_{2}.$ A subspace
is G-isotropic if it is G-orthogonal to itself. 

Krein amplitude has a clear physical meaning. The one-period solution
map $M(T)$ according to Eq.\,(\ref{Md}) can be expressed as
\begin{equation}
M(T)=\exp(J\hat{A}T)\,,
\end{equation}
where $\hat{A}$ is an effective matrix representing the averaged
effect by $A$ in one period. In physics, an eigenvector $\psi\in V_{\lambda}(M)$
is known as an eigenmode, and the jargon of eigen-frequency $\omega$,
defined by $\exp(-i\omega T)=\lambda$, is preferred. An eigenmode
$\psi$ of $M(T)$ is also an eigenmode of $J\hat{A}$,
\begin{equation}
J\hat{A}\psi=-i\omega\psi\,.
\end{equation}
The Krein amplitude of $\psi$ is 
\begin{equation}
\left\langle \psi,\psi\right\rangle _{G}=-i\psi^{*}J\psi=-\frac{\psi^{*}\hat{A}\psi}{\omega}=-2\frac{\hat{H}}{\omega}\,,
\end{equation}
which is proportional to the negative of the action of the eigenmode,
i.e., the time integral of the effective energy over one period. Because
the ratio between $\left\langle \psi,\psi\right\rangle _{G}$ and
$\hat{H}/\omega$ is an unimportant constant, we can identify the
Krein amplitude with the action of the eigenmode \citep{Zhang16GH,Zhang1711,Zhang18}. 

The following lemma gives a necessary condition for an eigenvalue
on the unit circle to be non-semi-simple. 
\begin{lem}
For a $\lambda\in\sigma(M)\cap U$ with at least one multiple elementary
divisor, there exists $\psi\in E_{\lambda}(M)$ such that $\left\langle \psi,\psi\right\rangle _{G}=0$
. 
\end{lem}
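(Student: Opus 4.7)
The plan is to exploit two ingredients: (i) the $M$-invariance of the Krein form $\langle\cdot,\cdot\rangle_G$, and (ii) the existence of a length-two Jordan chain at $\lambda$ guaranteed by the multiple-elementary-divisor hypothesis. The witness vector $\psi$ will be the \emph{eigenvector} sitting at the bottom of such a chain, and the computation showing $\langle\psi,\psi\rangle_G=0$ boils down to a one-line application of invariance combined with $|\lambda|=1$.

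First I would verify that $\langle M\psi',M\phi'\rangle_G=\langle\psi',\phi'\rangle_G$ for all $\psi',\phi'\in\mathbb{C}^{2n}$. Because $M$ is real and symplectic ($M^{\dagger}JM=J$),
\begin{equation}
\langle M\psi',M\phi'\rangle_G=-(M\phi')^{*}iJ(M\psi')=-\phi'^{*}M^{\dagger}iJM\psi'=-\phi'^{*}iJ\psi'=\langle\psi',\phi'\rangle_G,
\end{equation}
using that $\bar M=M$ so $M^{*}=M^{\dagger}$. Next, the hypothesis that some elementary divisor of $\lambda$ is multiple means $\ker(M-\lambda I)^{2}\supsetneq\ker(M-\lambda I)$, so I can choose $\psi_{1},\psi_{2}\in E_{\lambda}(M)$ with $\psi_{1}\neq 0$, $M\psi_{1}=\lambda\psi_{1}$, and $M\psi_{2}=\lambda\psi_{2}+\psi_{1}$.

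The key calculation is to apply the invariance to $\langle\psi_{2},\psi_{1}\rangle_G$ and use sesquilinearity (linear in the first slot, conjugate-linear in the second, since $(b\phi)^{*}=\bar b\,\phi^{*}$):
\begin{equation}
\langle\psi_{2},\psi_{1}\rangle_G=\langle M\psi_{2},M\psi_{1}\rangle_G=\langle\lambda\psi_{2}+\psi_{1},\lambda\psi_{1}\rangle_G=|\lambda|^{2}\langle\psi_{2},\psi_{1}\rangle_G+\bar{\lambda}\langle\psi_{1},\psi_{1}\rangle_G.
\end{equation}
Since $\lambda\in U$ we have $|\lambda|^{2}=1$, so the $\langle\psi_{2},\psi_{1}\rangle_G$ terms cancel and we are left with $\bar\lambda\langle\psi_{1},\psi_{1}\rangle_G=0$. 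As $\lambda\neq 0$ this forces $\langle\psi_{1},\psi_{1}\rangle_G=0$, so $\psi:=\psi_{1}\in V_{\lambda}(M)\subset E_{\lambda}(M)$ is the required G-isotropic vector.

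There is no serious obstacle here: the content of the lemma is really the $M$-invariance of the Krein form plus the fact that $|\lambda|=1$ forces the $\bar\lambda\langle\psi_{1},\psi_{1}\rangle_G$ correction term coming from the off-diagonal entry of the Jordan block to vanish on its own. The only subtle point I would be careful about is respecting the sesquilinear convention so that $\bar\lambda$, rather than $\lambda$, appears on the right—which is exactly what allows one to isolate $\langle\psi_{1},\psi_{1}\rangle_G$ after the $|\lambda|^{2}=1$ cancellation.
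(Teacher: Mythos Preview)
Your argument is correct: the $M$-invariance of the Krein form together with $|\lambda|=1$ forces the bottom vector of any length-two Jordan chain at $\lambda$ to be $G$-isotropic, exactly as you compute. Note, however, that the paper does not supply its own proof of this lemma; it is one of the lemmas explicitly ``given without proofs, which can be found in the textbooks'' of Long, Ekeland, and Yakubovich--Starzhinskii. Your proof is precisely the standard one appearing in those references, so there is nothing to contrast.
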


\begin{lem}
\label{lem:VGO}For two eigenvalues $\lambda$ and $\mu$, the eigenvector
spaces $V_{\lambda}$ and $V_{\mu}$ are G-orthogonal if $\lambda\bar{\mu}\neq1.$ 
\end{lem}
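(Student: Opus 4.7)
The plan is to exploit the symplectic invariance of the bilinear form $\omega(\psi,\phi) = \phi^\dagger J \psi$ (equivalently the Krein form $\langle\cdot,\cdot\rangle_G$) by computing $\langle M\psi, M\phi\rangle_G$ in two different ways and comparing the results.

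First, since $M \in Sp(2n,\mathbb{R})$ is real, we have $M^* = M^\dagger$ and $M^\dagger J M = J$, so
\begin{equation}
\langle M\psi, M\phi\rangle_G = -(M\phi)^* iJ(M\psi) = -\phi^* M^\dagger iJ M \psi = -\phi^* iJ\psi = \langle \psi,\phi\rangle_G\,.
\end{equation}
Thus the Krein product is preserved by $M$. Second, take $\psi \in V_\lambda(M)$ and $\phi \in V_\mu(M)$, so that $M\psi=\lambda\psi$ and $M\phi=\mu\phi$. Because $\langle\cdot,\cdot\rangle_G$ is linear in its first argument and conjugate-linear in its second,
\begin{equation}
\langle M\psi, M\phi\rangle_G = \langle \lambda\psi, \mu\phi\rangle_G = \lambda\bar{\mu}\,\langle \psi,\phi\rangle_G\,.
\end{equation}

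Combining the two calculations yields $(1-\lambda\bar{\mu})\langle\psi,\phi\rangle_G = 0$, so whenever $\lambda\bar{\mu}\neq 1$ the two eigenvectors must be G-orthogonal. Since $\psi$ and $\phi$ were arbitrary in $V_\lambda$ and $V_\mu$, the whole eigenvector spaces are G-orthogonal, which is the claim.

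There is essentially no obstacle in this argument: the only point that requires a little care is book-keeping the sesquilinearity convention of $\langle\cdot,\cdot\rangle_G$ so that the factor $\lambda\bar{\mu}$ (rather than, say, $\lambda\mu$ or $\bar\lambda\mu$) appears, and noting that reality of $M$ is what allows $M^\dagger$ (the symplectic adjoint) to be used in place of $M^*$ on a complex vector.
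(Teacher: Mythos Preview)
Your argument is correct and is in fact the standard textbook proof of this result. Note, however, that the paper does not supply its own proof of this lemma: it is listed among Lemmas~\ref{lem:Sspectrum}--\ref{lem:Kpp}, which the author explicitly states ``are given without proofs, which can be found in the textbooks by Long, Ekeland, and Yakubovich and Starzhinskii.'' So there is nothing to compare against; your proof simply fills in what the paper omits, and the sesquilinearity bookkeeping matches the paper's convention $\langle\psi,\phi\rangle_G=-\phi^* iJ\psi$.
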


It is remarkable that the G-orthogonality can be established for the
root-vector spaces as well.
\begin{lem}
\label{lem:EGO}For two eigenvalues $\lambda$ and $\mu$, the root-vector
spaces $E_{\lambda}$ and $E_{\mu}$ are G-orthogonal if $\lambda\bar{\mu}\neq1.$ 
\end{lem}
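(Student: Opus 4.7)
The plan is to reduce the root-vector statement to the eigenvector case (Lemma \ref{lem:VGO}) by exploiting the $G$-unitarity of $M$ together with the fact that the root-vector spaces of $M$ and $M^{-1}$ coincide, at reciprocal eigenvalues. The computational heart of the argument is an adjoint-type identity in which the eigenvalue is complex-conjugated and the operator inverted; once this is in hand, invertibility of the resulting operator on $E_\mu(M)$ lets one transfer the annihilating power of $\psi$ onto any given element of $E_\mu(M)$.

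First I would record the key identity, valid for all $\psi,\phi\in\mathbb{C}^{2n}$ and all $k\ge 1$:
\[
\langle (M-\lambda I)^k\psi,\,\phi\rangle_G \;=\; \langle \psi,\,(M^{-1}-\bar\lambda I)^k\phi\rangle_G.
\]
For $k=1$ this follows from $M^{\dagger}JM=J$, which gives the $G$-unitarity relation $\langle M\psi,\phi\rangle_G=\langle \psi,M^{-1}\phi\rangle_G$, combined with the sesquilinearity $\langle \psi,c\phi\rangle_G=\bar c\,\langle \psi,\phi\rangle_G$. The general case is immediate by induction on $k$.

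Next, the algebraic identity $M^{-1}-\mu^{-1}I=-\mu^{-1}M^{-1}(M-\mu I)$ together with $[M,M^{-1}]=0$ shows that $(M-\mu I)^\ell\phi=0$ iff $(M^{-1}-\mu^{-1}I)^\ell\phi=0$. Hence $E_\mu(M)=E_{\mu^{-1}}(M^{-1})$ as subsets of $\mathbb{C}^{2n}$; in particular $E_\mu(M)$ is $M^{-1}$-invariant and the restriction of $M^{-1}$ to $E_\mu(M)$ has $\mu^{-1}$ as its only eigenvalue. The hypothesis $\lambda\bar\mu\ne 1$ rewrites as $\bar\lambda\ne\mu^{-1}$, so the restriction of $M^{-1}-\bar\lambda I$ to $E_\mu(M)$ has the single nonzero eigenvalue $\mu^{-1}-\bar\lambda$. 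Consequently $(M^{-1}-\bar\lambda I)^k$ is a bijection of $E_\mu(M)$ onto itself for every $k$.

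To finish, take $\psi\in E_\lambda(M)$ with $(M-\lambda I)^k\psi=0$ for some $k$, take any $\phi'\in E_\mu(M)$, and use the bijection just established to pick $\phi\in E_\mu(M)$ with $(M^{-1}-\bar\lambda I)^k\phi=\phi'$. The displayed identity yields
\[
\langle \psi,\phi'\rangle_G \;=\; \langle \psi,\,(M^{-1}-\bar\lambda I)^k\phi\rangle_G \;=\; \langle (M-\lambda I)^k\psi,\,\phi\rangle_G \;=\; 0.
\]
The only delicate point is the bookkeeping in the conjugated-inverted adjoint identity: one must verify that transporting a polynomial in $M$ across $\langle\cdot,\cdot\rangle_G$ produces the same polynomial evaluated at $M^{-1}$ with coefficients conjugated, because only then does the hypothesis $\lambda\bar\mu\ne 1$ translate into the spectral gap $\bar\lambda\ne\mu^{-1}$ that makes $(M^{-1}-\bar\lambda I)$ invertible on $E_\mu(M)$. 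Everything else is routine.
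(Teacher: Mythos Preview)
The paper does not actually prove this lemma: it is listed among Lemmas~\ref{lem:Sspectrum}--\ref{lem:Kpp}, which the author explicitly states are ``given without proofs, which can be found in the textbooks by Long, Ekeland, and Yakubovich and Starzhinskii.'' So there is no in-paper argument to compare against.

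Your proof is correct and is essentially the standard textbook argument. The key computation $\langle M\psi,\phi\rangle_G=\langle\psi,M^{-1}\phi\rangle_G$ is exactly the $G$-unitarity of a symplectic matrix (from $M^\dagger JM=J$), and your passage from eigenspaces to root spaces via the invertibility of $(M^{-1}-\bar\lambda I)\big|_{E_\mu(M)}$ when $\bar\lambda\ne\mu^{-1}$ is the clean way to do it. One small remark on exposition: despite the opening sentence, you never actually invoke Lemma~\ref{lem:VGO}; your argument is self-contained and in fact subsumes that lemma as the $k=1$ case. That is no defect, but you may want to rephrase the plan so it matches what you do.
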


Lemma \ref{lem:EGO} implies for any $\lambda\in\sigma(M)\cap U,$
$E_{\lambda}$ is G-orthogonal to $E_{\mu}$ when $\mu\neq\lambda.$
As a consequence, $\mathbb{C}^{2n}$ has the following G-orthogonal
decomposition \citep{Long02}
\begin{align}
\mathbb{C}^{2n} & =\left(\oplus_{\lambda\in\sigma(M)\cap U}E_{\lambda}(M)\right)\oplus F(M)\,,
\end{align}
where 
\begin{equation}
F(M)=\oplus_{\lambda\in\sigma(M)\backslash U}E_{\lambda}(M)\,
\end{equation}
is the direct sum of the root-vector spaces for all eigenvalues off
the unit circle. Note that root-vector spaces for different eigenvalues
off the unit circle are not G-orthogonal in general. 
\begin{lem}
\label{lem:GE} For a $\lambda\in\sigma(M)\cap U$, $E_{\lambda}(M)$
is an invariant subspace of $G=-iJ.$ Furthermore, the restriction
of $G$ on $E_{\lambda}(M)$ is non-degenerate, i.e., for $x$ and
$y$ in $E_{\lambda}(M)$, $\left\langle x,y\right\rangle _{G}=0$
for all $y$ implies that $x=0.$ 
\end{lem}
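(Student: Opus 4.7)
The plan is to deduce both claims of Lemma~\ref{lem:GE} from Lemma~\ref{lem:EGO} together with the non-degeneracy of the full Krein form on $\mathbb{C}^{2n}$. The first assertion---that $E_{\lambda}(M)$ is an invariant subspace carrying the form $G=-iJ$---is structural: $E_{\lambda}(M)$ is $M$-invariant by construction as a generalized eigenspace, and the Krein form, defined globally by the matrix $G$ on $\mathbb{C}^{2n}$, restricts automatically to any subspace. The substantive content of the lemma is thus the ``Furthermore'' part, that this restriction is non-degenerate.

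To establish the non-degeneracy, I would first show that $E_{\lambda}$ is $G$-orthogonal to every other root space appearing in the decomposition $\mathbb{C}^{2n}=\bigl(\oplus_{\mu\in\sigma(M)\cap U}E_{\mu}\bigr)\oplus F(M)$. By Lemma~\ref{lem:EGO}, $E_{\lambda}$ is $G$-orthogonal to $E_{\mu}$ whenever $\lambda\bar\mu\neq 1$. For $\lambda\in U$, the equality $\lambda\bar\mu=1$ rearranges to $\mu=\bar\lambda^{-1}=\lambda$, so every $\mu\in\sigma(M)\cap U$ with $\mu\neq\lambda$ qualifies; for $\mu\in\sigma(M)\setminus U$, $|\lambda\bar\mu|=|\mu|\neq 1$, so the condition is automatic. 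Hence $E_{\lambda}$ is $G$-orthogonal to the complementary summand $W:=\bigl(\oplus_{\mu\in\sigma(M)\cap U,\,\mu\neq\lambda}E_{\mu}\bigr)\oplus F(M)$.

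Given an $x\in E_{\lambda}$ with $\langle x,y\rangle_{G}=0$ for every $y\in E_{\lambda}$, any $z\in\mathbb{C}^{2n}$ can be written uniquely as $z=y+w$ with $y\in E_{\lambda}$ and $w\in W$. The orthogonality from the previous step gives $\langle x,w\rangle_{G}=0$, and together with the hypothesis $\langle x,y\rangle_{G}=0$ one obtains $\langle x,z\rangle_{G}=0$ for all $z\in\mathbb{C}^{2n}$. Because $G=-iJ$ is non-degenerate on the whole space (since $J$ is invertible), this forces $x=0$, completing the proof of non-degeneracy.

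The main obstacle would have been establishing $G$-orthogonality between root spaces for eigenvalues of different moduli, but this is already packaged in Lemma~\ref{lem:EGO}. Granted that lemma, the argument reduces to the simple spectral bookkeeping verifying $\lambda\bar\mu\neq 1$ uniformly for all $\mu\in\sigma(M)\setminus\{\lambda\}$, together with the standard deduction that non-degeneracy on an ambient space descends to non-degeneracy on a summand whenever a $G$-orthogonal complement is available.
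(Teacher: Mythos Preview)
The paper does not actually supply a proof of this lemma: it is one of the results (Lemmas~\ref{lem:Sspectrum}--\ref{lem:Kpp}) explicitly stated without proof and attributed to the textbooks of Long, Ekeland, and Yakubovich--Starzhinskii. Your argument for the non-degeneracy of the restricted Krein form is correct and is exactly the standard textbook proof: Lemma~\ref{lem:EGO} forces $E_\lambda$ to be $G$-orthogonal to every other root space when $|\lambda|=1$, and then global non-degeneracy of $-iJ$ pushes any radical vector of the restricted form down to zero.

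One remark on the first clause. You read ``$E_\lambda(M)$ is an invariant subspace of $G$'' as the structural statement that the Krein form restricts to $E_\lambda$, and that is the right reading for the lemma to be true and to match how it is used afterward (to define Krein type via the signature of the restricted form). The literal operator-invariance reading $G\,E_\lambda\subset E_\lambda$ is \emph{false} in general: for $n=1$, take $M=S^{-1}R(\theta)S$ with $S=\mathrm{diag}(2,1/2)\in Sp(2,\mathbb{R})$; the $e^{i\theta}$-eigenvector is $\psi=(1/2,\,2i)^{\dagger}$, while $G\psi=-iJ\psi=(2,\,i/2)^{\dagger}$, which is not a scalar multiple of $\psi$. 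So your decision to treat the first sentence as a setup statement about the form, rather than an operator-invariance claim to be proved, is justified and not an evasion.
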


As an operator on $\mathbb{C}^{2n}$, $G=-iJ$ is Hermitian, i.e.,
$\left(-iJ\right)^{*}=-iJ$. It is also non-degenerate because $\text{det}(G)\neq0$.
Since a Hermitian matrix is diagonalizable and its eigenvalues are
all real, there exists a G-orthogonal basis for $\mathbb{C}^{2n}$. 

By Lemma \ref{lem:GE}, the restriction of $G$ on $E_{\lambda}(M)$
for $\lambda\in\sigma(M)\cap U$ is a non-degenerate Hermitian operator,
which implies $G|_{E_{\lambda}(M)}$ is diagonalizable with non-zero
eigenvalues. Similar to the situation for $\mathbb{C}^{2n}$, the
space of $E_{\lambda}(M)$ admits a G-orthogonal base. Let the dimension
of of $E_{\lambda}(M)$ is $m$, and $p$ and $q$ are the numbers
of positive and negative eigenvalues of $G|_{E_{\lambda}(M)}$ respectively.
The pair $(p,q)$ is known as the Krein type of $\lambda.$ Obviously,
$p+q=m.$ If $q=0$, $\lambda$ is Krein-positive, and if $p=0$,
$\lambda$ is Krein-negative. An eigenvalue $\lambda$ is Krein-definite,
if it is either Krein-positive or Krein-negative. Otherwise, $\lambda$
is Krein-indefinite or of mixed type.
\begin{lem}
For a $\lambda\in\sigma(M)\cap U$ and any $P\in Sp(2n,\mathbb{R}),$
the geometric and algebraic multiplicities and Krein type of $\lambda$
are identical for $P^{-1}MP$ and $M.$ 
\end{lem}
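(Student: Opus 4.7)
The plan is to use the obvious similarity map $\phi\mapsto P^{-1}\phi$ as an isomorphism between the relevant subspaces of $M$ and of $P^{-1}MP$, verify that it preserves the filtration by root spaces (which requires only invertibility of $P$), and then use the symplectic hypothesis to show that it also preserves the Krein form. Sylvester's law of inertia will then give equality of Krein types.

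For the multiplicities, the starting identity is $(P^{-1}MP-\lambda I)^{k}=P^{-1}(M-\lambda I)^{k}P$, valid for every integer $k\geq 1$. Thus $\phi\mapsto P^{-1}\phi$ sends $\ker(M-\lambda I)^{k}$ bijectively onto $\ker(P^{-1}MP-\lambda I)^{k}$. Taking $k=1$ yields $V_{\lambda}(M)\cong V_{\lambda}(P^{-1}MP)$, and unioning over $k$ yields $E_{\lambda}(M)\cong E_{\lambda}(P^{-1}MP)$. In particular $Mult_{G\lambda}$ and $Mult_{A\lambda}$ take the same value on $M$ and $P^{-1}MP$. Symplecticity is not used at this stage.

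For the Krein type the symplectic hypothesis does enter. The direct computation
\[
\langle P^{-1}\psi,\,P^{-1}\phi\rangle_{G}=-(P^{-1}\phi)^{*}iJ(P^{-1}\psi)=-\phi^{*}P^{-\dagger}iJP^{-1}\psi
\]
reduces the question to the identity $P^{-\dagger}JP^{-1}=J$, which is just the symplectic condition rewritten (equivalent to $P^{\dagger}JP=J$, or $PJP^{\dagger}=J$, by inversion together with $J^{-1}=-J$). Hence the isomorphism intertwines the Krein forms,
\[
\langle P^{-1}\psi,\,P^{-1}\phi\rangle_{G}=\langle\psi,\phi\rangle_{G},
\]
so the restrictions $G|_{E_{\lambda}(M)}$ and $G|_{E_{\lambda}(P^{-1}MP)}$ are congruent Hermitian forms. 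Sylvester's law of inertia then ensures that they have the same number $p$ of positive eigenvalues and the same number $q$ of negative eigenvalues, so the Krein type $(p,q)$ is invariant.

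The only genuinely substantive ingredient is the observation that the symplectic condition on $P$ is exactly what makes $P^{-1}$ a Krein-isometry; once that is noted, the remainder is bookkeeping. Had the lemma claimed invariance only of geometric and algebraic multiplicities, arbitrary similarity would have sufficed, and the hypothesis $P\in Sp(2n,\mathbb{R})$ would have been superfluous. It is the Krein signature that forces it, and that is where I expect to spend most of the verification effort.
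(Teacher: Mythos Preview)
Your argument is correct. The paper does not actually prove this lemma: it is one of the results listed as ``given without proofs, which can be found in the textbooks by Long, Ekeland, and Yakubovich and Starzhinskii.'' So there is no paper-proof to compare against. Your approach---transporting root spaces by the similarity $\phi\mapsto P^{-1}\phi$ and then observing that the symplectic condition makes this map a Krein isometry, so that Sylvester's law of inertia preserves the signature $(p,q)$---is the standard one and is exactly what the textbook references would give. The one small point worth making explicit (you implicitly use it) is that $P$ is real, so $(P^{-1}\phi)^{*}=\phi^{*}P^{-\dagger}$; otherwise the computation of $\langle P^{-1}\psi,P^{-1}\phi\rangle_{G}$ would acquire a complex conjugate on $P$.
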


\begin{lem}
\label{lem:Kpp}

(a) For any $x\in\mathbb{C}^{2n}$, $\left\langle Gx,x\right\rangle =-\left\langle G\bar{x},\bar{x}\right\rangle .$ 

(b) If $\lambda\in\sigma(M)\cap U$ has Krein type $(p,q)$, then
$\bar{\lambda}$ has Krein type $(q,p)$. In particular, if $1$ or
$-1$ is an eigenvalue of $M$, its Krein type is $(p,p)$ for some
$p\in\mathbb{N}$.
\end{lem}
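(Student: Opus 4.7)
The plan is to treat (a) as a direct computation and then derive (b) almost immediately from it by invoking the antilinear symmetry $x\mapsto\bar{x}$ between $E_{\lambda}(M)$ and $E_{\bar{\lambda}}(M)$.

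For (a) I would unpack the definitions. With $G=-iJ$ and the convention $\langle u,v\rangle=v^{*}u$ indicated by the Krein-product definition, one has $\langle Gx,x\rangle=-ix^{*}Jx$, while $\langle G\bar{x},\bar{x}\rangle=-i\bar{x}^{*}J\bar{x}=-ix^{T}J\bar{x}$ (using $\bar{x}^{*}=x^{T}$). The key observation is that $x^{*}Jx$ is purely imaginary: since $J$ is real and $J^{\dagger}=-J$, the Hermitian conjugate of this scalar satisfies $(x^{*}Jx)^{*}=x^{*}J^{\dagger}x=-x^{*}Jx$. Taking complex conjugates componentwise then gives $x^{T}J\bar{x}=\overline{x^{*}Jx}=-x^{*}Jx$, and hence $\langle G\bar{x},\bar{x}\rangle=ix^{*}Jx=-\langle Gx,x\rangle$.

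For (b) the essential input is that $M$ is a \emph{real} matrix, so $(M-\lambda I)^{k}x=0$ if and only if $(M-\bar{\lambda}I)^{k}\bar{x}=0$. Consequently componentwise conjugation restricts to a real-linear bijection $E_{\lambda}(M)\to E_{\bar{\lambda}}(M)$; in particular these spaces have the same complex dimension. I would then choose a $G$-orthogonal basis $\{e_{1},\ldots,e_{p},f_{1},\ldots,f_{q}\}$ of $E_{\lambda}(M)$ diagonalizing $G|_{E_{\lambda}(M)}$, with $\langle Ge_{i},e_{i}\rangle>0$ and $\langle Gf_{j},f_{j}\rangle<0$. By (a), the conjugated family $\{\bar{e}_{1},\ldots,\bar{e}_{p},\bar{f}_{1},\ldots,\bar{f}_{q}\}$ is a $G$-orthogonal basis of $E_{\bar{\lambda}}(M)$ in which every Krein amplitude has flipped sign, so $G|_{E_{\bar{\lambda}}(M)}$ has $q$ positive and $p$ negative eigenvalues. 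Hence the Krein type of $\bar{\lambda}$ is $(q,p)$. The ``in particular'' claim is then immediate: $\pm 1$ are self-conjugate, forcing $(p,q)=(q,p)$, i.e., $p=q$.

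I do not expect a serious obstacle: the only thing to watch is the bookkeeping between the two distinct conjugations $x\mapsto\bar{x}$ (componentwise) and $x\mapsto x^{*}$ (Hermitian), plus confirming that conjugation is a bijection onto $E_{\bar{\lambda}}(M)$ and not merely an injection (automatic because applying it twice is the identity). The engine of the lemma is the single identity $J^{\dagger}=-J$, which forces the indefinite form $x^{*}Jx$ to be purely imaginary and thereby supplies the sign flip driving both parts.
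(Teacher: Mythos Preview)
Your argument is correct. Note that the paper itself does not prove Lemma~\ref{lem:Kpp}; it is one of the lemmas explicitly stated without proof and referred to the textbooks of Long, Ekeland, and Yakubovich--Starzhinskii. So there is no in-paper proof to compare against, and your direct computation is exactly the standard textbook route.

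One small remark on part (b): when you assert ``by (a), the conjugated family is a $G$-orthogonal basis,'' part (a) as stated only handles the diagonal Krein amplitudes. To conclude that the off-diagonal pairings also vanish you need the two-variable identity $\langle G\bar{x},\bar{y}\rangle=-\overline{\langle Gx,y\rangle}$, which follows from the same one-line computation (using $J$ real and $J^{\dagger}=-J$) that you used for (a). With that in hand, $G$-orthogonality of $\{e_i,f_j\}$ transfers to $\{\bar{e}_i,\bar{f}_j\}$, and the signature count goes through as you describe. You might also spell out in one clause that conjugation takes a $\mathbb{C}$-basis to a $\mathbb{C}$-basis (conjugate any putative dependence relation), since the map is only $\mathbb{R}$-linear; this is the sort of thing a careful reader will pause on even though it is immediate.
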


The $\diamond$-product of two square matrices introduced by Long
\citep{Long02} is an indispensable tool in the manipulations of symplectic
matrices. Let $M_{1}$ be a $2i\times2i$ matrix and $M_{2}$ a $2j\times2j$
matrix in the square block form, 
\begin{equation}
M_{1}=\left(\begin{array}{cc}
A_{1} & B_{1}\\
C_{1} & D_{1}
\end{array}\right)\,,\,\,\,\,M_{2}=\left(\begin{array}{cc}
A_{2} & B_{2}\\
C_{2} & D_{2}
\end{array}\right)\,.
\end{equation}
The $\diamond$-product of $M_{1}$ and $M_{2}$ is defined to be
a $2(i+j)\times2(i+j)$ matrix as 
\begin{equation}
M_{1}\diamond M_{2}=\left(\begin{array}{cccc}
A_{1} & 0 & B_{1} & 0\\
0 & A_{2} & 0 & B_{2}\\
C_{1} & 0 & D_{1} & 0\\
0 & C_{2} & 0 & D_{2}
\end{array}\right)\,.
\end{equation}
The $\diamond$-product defined here is compatible with the standard
symplectic matrix defined in Eq.\,(\ref{J}). The $\diamond$-product
of two symplectic matrices is symplectic \citep{Long02}. It can be
viewed as a direct sum of two matrix vectors to form a matrix vector
in higher dimension. The following theorem is the main result of this
section, which establishes the normal form for a stable symplectic
matrix. 
\begin{thm}
\label{thm:normal}For a symplectic matrix $M\in Sp(2n,\mathbb{R})$,
it is stable iff it is similar to a direct sum of elements in $SO(2,\mathbb{R})$
by a symplectic matrix, i.e., there exist a $N=R(\theta_{1})\diamond R(\theta_{2})...\diamond R(\theta_{n})$
and a $F\in Sp(2n,\mathbb{R})$ such that 
\begin{equation}
M=FNF^{-1},\label{eq:Mnormal}
\end{equation}
 where $\theta_{j}\in\mathbb{R}$ and 
\begin{equation}
R(\theta_{j})=\left(\begin{array}{cc}
\cos\theta_{j} & \sin\theta_{j}\\
-\sin\theta_{j} & \cos\theta_{j}
\end{array}\right)\in SO(2,\mathbb{R}).
\end{equation}

\end{thm}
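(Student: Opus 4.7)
The plan is to treat the two directions separately. Sufficiency is immediate: if $M=FNF^{-1}$ with $N=R(\theta_{1})\diamond\cdots\diamond R(\theta_{n})$, then $M^{l}=FN^{l}F^{-1}$, and since each $R(\theta_{j})^{l}=R(l\theta_{j})$ has unit norm, $M^{l}$ stays uniformly bounded.

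For necessity I would use the machinery from Sec.~\ref{sec:Normal}. Stability of $M$ means every eigenvalue lies on $U$ and is semi-simple, hence $E_{\lambda}(M)=V_{\lambda}(M)$ and $\mathbb{C}^{2n}=\bigoplus_{\lambda\in\sigma(M)}V_{\lambda}(M)$ is a G-orthogonal decomposition (Lemmas \ref{lem:VGO}--\ref{lem:EGO}). Because $M$ is real, eigenvalues come in conjugate pairs $(\lambda,\bar\lambda)$, with $\lambda=\pm1$ the self-conjugate cases. A preliminary observation, following from symplectic invariance of $\omega(\psi,\phi)=\psi^{\dagger}J\phi$ under $M$ together with $\omega(M\psi,M\phi)=\lambda\mu\,\omega(\psi,\phi)$, is that $V_{\lambda}\perp_{\omega}V_{\mu}$ whenever $\lambda\mu\neq1$. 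Thus the blocks $V_{\lambda}\oplus V_{\bar\lambda}$ (for complex $\lambda$) and the real eigenspaces $V_{\pm1}$ are mutually symplectically orthogonal, and it suffices to construct a real symplectic basis within each block on which $M$ acts as a $\diamond$-sum of rotations.

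For a complex pair at $\lambda=e^{i\theta}$, $\theta\in(0,\pi)$, Lemma \ref{lem:GE} supplies a G-orthogonal basis $\psi_{1},\dots,\psi_{k}$ of $V_{\lambda}$ which I normalize so that $\langle\psi_{j},\psi_{j}\rangle_{G}=\epsilon_{j}\in\{\pm2\}$. Splitting $\psi_{j}=u_{j}+iv_{j}$ with $u_{j},v_{j}\in\mathbb{R}^{2n}$, a direct calculation yields $u_{j}^{\dagger}Jv_{j}=\epsilon_{j}/2$; combining G-orthogonality of the $\psi_{j}$ with the automatic identity $\omega(\psi_{j},\psi_{k})=0$ on $V_{\lambda}$ (forced by $\lambda^{2}\neq1$) makes all remaining cross-pairings $u_{j}^{\dagger}Ju_{k}$, $v_{j}^{\dagger}Jv_{k}$, $u_{j}^{\dagger}Jv_{k}$ for $j\neq k$ vanish. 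The eigenvalue equation gives $Mu_{j}=\cos\theta\,u_{j}-\sin\theta\,v_{j}$ and $Mv_{j}=\sin\theta\,u_{j}+\cos\theta\,v_{j}$, so $M$ acts as $R(\theta)$ on the symplectic pair $(u_{j},v_{j})$ when $\epsilon_{j}=+2$; when $\epsilon_{j}=-2$ I reorder to the standard symplectic pair $(v_{j},u_{j})$, on which $M$ acts as $R(-\theta)$. For $\lambda=\pm1$, Lemma \ref{lem:Kpp}(b) forces Krein type $(p,p)$, so $V_{\lambda}\cap\mathbb{R}^{2n}$ is $2p$-dimensional, and non-degeneracy of $G|_{V_{\lambda}}$ transfers to non-degeneracy of $\omega$ on this real subspace; any Darboux basis there makes $M=\lambda I$ appear as $p$ copies of $R(0)$ or $R(\pi)$. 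Concatenating the bases across all blocks produces the desired symplectic $F$ with $F^{-1}MF=N$.

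The main obstacle is the complex-block step, specifically verifying that G-orthogonality of the $\psi_{j}$ transfers cleanly to full symplectic orthogonality of the real pairs $(u_{j},v_{j})$ and that the sign of each Krein amplitude can be absorbed by allowing negative rotation angles in $N$. The rest is routine linear algebra once this bookkeeping is secured.
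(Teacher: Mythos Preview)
Your proposal is correct and follows essentially the same route as the paper: both arguments diagonalize $G=-iJ$ on each eigenspace $V_{\lambda}$ using Lemmas~\ref{lem:VGO}--\ref{lem:Kpp}, split the resulting G-orthogonal eigenvectors into real and imaginary parts, and read off a real symplectic basis on which $M$ acts by $2\times2$ rotations. The only organizational difference is that the paper normalizes so that every chosen $\psi_{l}$ has Krein amplitude $+1$ (swapping $\lambda\leftrightarrow\bar{\lambda}$ when necessary) and then writes $F=\sqrt{2}(\xi_{1},\dots,\xi_{n},\eta_{1},\dots,\eta_{n})$ explicitly, verifying $F^{\dagger}JF=J$ and $F^{-1}MF=N$ by direct block computation; you instead allow $\epsilon_{j}=\pm2$ and absorb the sign by reordering $(u_{j},v_{j})\mapsto(v_{j},u_{j})$, which yields $R(-\theta)$ in place of $R(\theta)$. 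These two bookkeeping choices are equivalent, and your verification that G-orthogonality together with $\omega(\psi_{j},\psi_{k})=0$ on $V_{\lambda}$ forces all real cross-pairings to vanish is exactly the content of the paper's Eqs.~(\ref{eq:jxi1})--(\ref{eq:jet}).
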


\begin{proof}
Recall that stability of $M$ means that $M$ is diagonalizable and
all eigenvalues of $M$ locate on the unit circle $U$ of the complex
plane. It is straightforward to verify that $N=R(\theta_{1})\diamond R(\theta_{2})...\diamond R(\theta_{n})\in Sp(2n,\mathbb{R})\cap O(2n,\mathbb{R})\backsimeq U(n).$
Since a unitary matrix is stable, the sufficiency is obvious. 

For necessity, assume that $M\in Sp(2n,\mathbb{R})$ is stable. First,
let's construct a G-orthonormal basis $(\psi_{l},\psi_{-l})\,,\,l=1,2,...,n,$
for $\mathbb{C}^{2n}$, which by definition satisfies the following
conditions for $1\leq l,m\leq n$,
\begin{align}
M\psi_{l} & =\lambda_{l}\psi_{l}\,,\label{eq:base1}\\
\psi_{-l} & =\bar{\psi}_{l}\,,\\
\left\langle G\psi_{l},\psi_{m}\right\rangle  & =\delta_{lm\,,}\label{eq:base3}\\
\left\langle G\psi_{-l},\psi_{-m}\right\rangle  & =-\delta_{lm\,,}\\
\left\langle G\psi_{l},\psi_{-m}\right\rangle  & =0\,,\label{eq:base5}
\end{align}
where $\left(\lambda_{l},\psi_{l}\right)$ is a pair of eigenvalue
and eigenvector, so is $\left(\bar{\lambda}_{l},\psi_{-l}\right)$.
Thus this G-orthonormal basis $(\psi_{l},\psi_{-l})\,\,(l=1,2,...,n)$
for $\mathbb{C}^{2n}$ consists of eigenvectors of $M$, with appropriately
chosen labels. There are in total $2n$ eigenvectors. Because $M$
is stable, every eigenvalue of $M$ is on the unit circle $U$ and
is either simple or semi-simple, and for every $\lambda\in\sigma(M)$,
the eigenvector space $V_{\lambda}(M)$ is identical to the root-vector
space $E_{\lambda}(M).$ The simple and semi-simple cases need to
be treated differently. 

Case a). For a simple eigenvalue $\lambda$ on $U$, $V_{\lambda}(M)=E_{\lambda}(M)$
is one-dimensional. Denote the eigenvector by $\psi_{\lambda}$. According
to Lemma \ref{lem:EGO}, $\psi_{\lambda}$ is G-orthogonal to any
other eigenvector or root-vector of $M.$ By Lemma \ref{lem:GE},
$\left\langle \psi_{\lambda},\psi_{\lambda}\right\rangle _{G}\neq0.$
Without losing generality, we can let $\left\langle \psi_{\lambda},\psi_{\lambda}\right\rangle _{G}=1$
or $-1$. Now, for every simple $\lambda$ on $U$, $\bar{\lambda}$
is in $\sigma(M)\cap U$ and $\bar{\lambda}\neq\lambda$. Furthermore,
$\bar{\lambda}$ is also simple with opposition Krein signature, and
the corresponding eigenvector $\psi_{\bar{\lambda}}$ is G-orthogonal
to all other eigenvectors or root-vectors. All the eigenvectors corresponding
to simple eigenvalues pair up nicely in this manner. It is natural
to denote them by $\left(\psi_{l},\psi_{-l}\right),\,1\leq l\leq n_{s}$,
where $2n_{s}$ is the total number of simple eigenvalues and $l$
is the index for the simple eigenvalues. The pair of eigenvectors
$\left(\psi_{l},\psi_{-l}\right)$ correspond to the pair of eigenvalues
$(\lambda_{l},\bar{\lambda}_{l}),$ where $\psi_{l}$ is the eigenvector
with positive Krein signature, and $\psi_{-l}$ with the negative
Krein signature. Therefore, $\left(\psi_{l},\psi_{-l}\right),\,1\leq l\leq n_{s}$,
form a G-orthonormal basis satisfying Eqs.\,(\ref{eq:base1})-(\ref{eq:base5}),
for the subspace spanned by the eigenvectors of simple eigenvalues
on the unit circle $U.$ 

Case b). For a semi-simple eigenvalue $\mu$ on $U$, the subspace
$V_{\mu}(M)=E_{\mu}(M)$ is two-dimensional or higher. Since $G$
is Hermitian on $E_{\mu}(M),$ it can be diagonalized using a proper
basis $\psi_{\mu}^{i},$ $i=1,...,m_{\mu}$, on $E_{\mu}(M),$ where
$m_{\mu}$ is the multiplicity of $\mu.$ By Lemma \ref{lem:GE},
the restriction of $G$ on $E_{\mu}(M)$ is non-degenerate. The eigenvalues
of $G$ is either positive or negative. Let the Krein type of $\mu$
is $(p,q)$ with $p+q=m_{\mu}.$ Lemma \ref{lem:Kpp} asserts that
the Krein type of $\bar{\mu}$ is $(q,p),$ with $\psi_{\bar{\mu}}^{i}=\bar{\psi}_{\mu}^{i}.$ 

Case b1). When $\mu\neq\pm1,$ $\mu\neq\bar{\mu}$ and $E_{\mu}(M)$
is G-orthogonal to $E_{\bar{\mu}}(M).$ We can thus pair up $\psi_{\mu}^{i}$
and $\psi_{\bar{\mu}}^{i}$ for $i=1,...,m_{\mu}$ to form a G-orthonormal
basis satisfying Eqs.\,(\ref{eq:base1})-(\ref{eq:base5}) for the
subspace $E_{\mu}(M)\oplus E_{\bar{\mu}}(M).$

Case b2). When $\mu=\pm1,$ $E_{\mu}(M)=E_{\bar{\mu}}(M)$ and $p=q=m_{\mu}/2$.
In this case, we can construct a G-orthonormal base of $E_{\mu}(M)$
in the following way. Let $\psi_{\mu}^{i},$ $i=1,...,p,$ be eigenvectors
with positive Krein signatures. Then according to Lemma \ref{lem:Kpp},
$\bar{\psi}_{\mu}^{i},$ $i=1,...,p,$ will have negative Krein signatures.
Note also that $\left(\psi_{\mu}^{i},\bar{\psi}_{\mu}^{i}\right),$
$i=1,...,p$, diagonalize $G$ on $E_{\mu}(M).$ Thus, the pairs $\left(\psi_{\mu}^{i},\bar{\psi}_{\mu}^{i}\right),$
$i=1,...,p$, make up a G-orthonormal base satisfying Eqs.\,(\ref{eq:base1})-(\ref{eq:base5})
for the subspace $E_{\mu}(M)=E_{\bar{\mu}}(M)$. 

Assembling the eigenvector pairs from Cases a), b1), b2) in the order
presented, we obtain the G-orthonormal base $(\psi_{l},\psi_{-l}),\,l=1,2,...,n,$
for $\mathbb{C}^{2n}$ satisfying Eqs.\,(\ref{eq:base1})-(\ref{eq:base5}).
Note that the eigenvalues $\lambda_{l}$ $(1\leq l\leq n_{s})$ are
simple and distinct, whereas the eigenvalues $\lambda_{l}$ $(n_{s}+1\leq l\leq n)$
are semi-simple, and each eigenvalue may appear several times in the
sequence.

Now, we explicitly build the normal form of Eq.\,(\ref{eq:Mnormal})
using the G-orthonormal base $(\psi_{l},\psi_{-l}),\,l=1,2,...,n$.
Let 
\begin{equation}
\psi_{l}=\xi_{l}+\sqrt{-1}\eta_{l}\,,\,\,\lambda_{l}=c_{l}+\sqrt{-1}s_{l}\,,
\end{equation}
where $\xi_{l}$, $\eta_{l}\in\mathbb{R}^{2n}$, $c_{l}=\cos\theta_{l},$
$s_{l}=\sin\theta_{l},$ and $0\leq\theta_{l}<2\pi$. In terms of
this set of real variables, $M\psi_{l}=\lambda_{l}\psi_{l}$ is 
\begin{align}
M\xi_{l} & =c_{l}\xi_{l}-s_{l}\eta_{l}\,,\label{eq:Mxi}\\
M\eta_{l} & =c_{l}\eta_{l}+s_{l}\xi_{l}\,.\label{eq:Met}
\end{align}
The G-orthonormal conditions (\ref{eq:base3})-(\ref{eq:base5}) are
equivalent to
\begin{align}
\left\langle J\xi_{l},\xi_{m}\right\rangle  & =0\,,\label{eq:jxi1}\\
\left\langle J\eta_{l},\eta_{m}\right\rangle  & =0\,,\\
\left\langle J\xi_{l},\eta_{m}\right\rangle  & =\eta_{m}^{\dagger}J\xi_{l}=-\delta_{lm}/2\,,\\
\left\langle J\eta_{l},\xi_{m}\right\rangle  & =\xi_{m}^{\dagger}J\eta_{l}=\delta_{lm}/2\,.\label{eq:jet}
\end{align}
We now prove that the matrix $F$ in Eq.\,(\ref{eq:Mnormal}) in
given by
\begin{equation}
F=\sqrt{2}\left(\xi_{1},\xi_{2},...,\xi_{n},\eta_{1},\eta_{2},...,\eta_{n}\right)\,.
\end{equation}
The fact that $F\in Sp(2n,\mathbb{R})$ is shown by direct calculation,
\begin{align}
F^{\dagger}JF & =2\left(\begin{array}{c}
\xi_{1}^{\dagger}\\
\vdots\\
\xi_{n}^{\dagger}\\
\eta_{1}^{\dagger}\\
\vdots\\
\eta_{n}^{\dagger}
\end{array}\right)J\left(\xi_{1},\xi_{2},...,\xi_{n},\eta_{1},\eta_{2},...,\eta_{n}\right)\nonumber \\
 & =2\left(\begin{array}{cccccc}
\xi_{1}^{\dagger}J\xi_{1} & \ldots & \xi_{1}^{\dagger}J\xi_{n} & \xi_{1}^{\dagger}J\eta_{1} & \cdots & \xi_{1}^{\dagger}J\eta_{n}\\
\vdots & \ddots & \vdots & \vdots & \ddots & \vdots\\
\xi_{1}^{\dagger}J\xi_{1} & \ldots & \xi_{1}^{\dagger}J\xi_{n} & \xi_{1}^{\dagger}J\eta_{1} & \cdots & \xi_{1}^{\dagger}J\eta_{n}\\
\eta_{1}^{\dagger}J\xi_{1} & \ldots & \eta_{1}^{\dagger}J\xi_{n} & \eta_{1}^{\dagger}J\eta_{1} & \cdots & \eta_{1}^{\dagger}J\eta_{n}\\
\vdots & \ddots & \vdots & \vdots & \ddots & \vdots\\
\eta_{n}^{\dagger}J\xi_{1} & \ldots & \eta_{n}^{\dagger}J\xi_{n} & \eta_{n}^{\dagger}J\eta_{1} & \cdots & \eta_{n}^{\dagger}J\eta_{n}
\end{array}\right)\nonumber \\
 & =J\,,
\end{align}
where Eqs.\,(\ref{eq:jxi1})-(\ref{eq:jet}) are used in the last
equal sign. The last step is to show that $N=F^{-1}MF=-JF^{\dagger}JMF$
is of the form $R(\theta_{1})\diamond R(\theta_{2})...\diamond R(\theta_{n})$,
again by direct calculation. From Eqs.\,(\ref{eq:Mxi}) and (\ref{eq:Met}),
we have
\begin{align}
MF & =\sqrt{2}\left(M\xi_{1},...,M\xi_{n},M\eta_{1},...,M\eta_{n}\right)\nonumber \\
 & =\sqrt{2}\left(c_{1}\xi_{1}-s_{1}\eta_{1},...,c_{n}\xi_{n}-s_{n}\eta_{n},s_{1}\xi_{1}+c_{1}\eta_{1},...,s_{n}\xi_{n}+c_{n}\eta_{n}\right)\,.
\end{align}
Then, 
\begin{align}
N & =-JF^{\dagger}JMF\nonumber \\
 & =-2J\left(\begin{array}{c}
\xi_{1}^{\dagger}\\
\vdots\\
\xi_{n}^{\dagger}\\
\eta_{1}^{\dagger}\\
\vdots\\
\eta_{n}^{\dagger}
\end{array}\right)J\left(c_{1}\xi_{1}-s_{1}\eta_{1},...,c_{n}\xi_{n}-s_{n}\eta_{n},s_{1}\xi_{1}+c_{1}\eta_{1},...,s_{n}\xi_{n}+c_{n}\eta_{n}\right)\nonumber \\
 & =-2J\left(\begin{array}{cccccc}
\xi_{1}^{\dagger}J\left(\xi_{1}c_{1}-\eta_{1}s_{1}\right) & \ldots & \xi_{1}^{\dagger}J\left(\xi_{n}c_{n}-\eta_{n}s_{n}\right) & \xi_{1}^{\dagger}J\left(\xi_{1}s_{1}+\eta_{1}c_{1}\right) & \ldots & \xi_{1}^{\dagger}J\left(\xi_{n}s_{n}+\eta_{n}c_{n}\right)\\
\vdots & \ddots & \vdots & \vdots & \ddots & \vdots\\
\xi_{n}^{\dagger}J\left(\xi_{1}c_{1}-\eta_{1}s_{1}\right) & \ldots & \xi_{n}^{\dagger}J\left(\xi_{n}c_{n}-\eta_{n}s_{n}\right) & \xi_{n}^{\dagger}J\left(\xi_{1}s_{1}+\eta_{1}c_{1}\right) & \ldots & \xi_{n}^{\dagger}J\left(\xi_{n}s_{n}+\eta_{n}c_{n}\right)\\
\eta_{1}^{\dagger}J\left(\xi_{1}c_{1}-\eta_{1}s_{1}\right) & \ldots & \eta_{1}^{\dagger}J\left(\xi_{n}c_{n}-\eta_{n}s_{n}\right) & \eta_{1}^{\dagger}J\left(\xi_{1}s_{1}+\eta_{1}c_{1}\right) & \ldots & \eta_{1}^{\dagger}J\left(\xi_{n}s_{n}+\eta_{n}c_{n}\right)\\
\vdots & \ddots & \vdots & \vdots & \ddots & \vdots\\
\eta_{n}^{\dagger}J\left(\xi_{1}c_{1}-\eta_{1}s_{1}\right) & \ldots & \eta_{n}^{\dagger}J\left(\xi_{n}c_{n}-\eta_{n}s_{n}\right) & \eta_{n}^{\dagger}J\left(\xi_{1}s_{1}+\eta_{1}c_{1}\right) & \ldots & \eta_{n}^{\dagger}J\left(\xi_{n}s_{n}+\eta_{n}c_{n}\right)
\end{array}\right)\nonumber \\
 & =-J\left(\begin{array}{cccccc}
-s_{1} & 0 & 0 & c_{1} & 0 & 0\\
0 & \ddots & 0 & 0 & \ddots & 0\\
0 & 0 & -s_{n} & 0 & 0 & c_{n}\\
-c_{1} & 0 & 0 & -s_{1} & 0 & 0\\
0 & \ddots &  & 0 & \ddots & 0\\
0 & 0 & -c_{n} & 0 & 0 & -s_{n}
\end{array}\right)=\left(\begin{array}{cccccc}
c_{1} & 0 & 0 & s_{1} & 0 & 0\\
0 & \ddots & 0 & 0 & \ddots & 0\\
0 & 0 & c_{n} & 0 & 0 & s_{n}\\
-s_{1} & 0 & 0 & c_{1} & 0 & 0\\
0 & \ddots & 0 & 0 & \ddots & 0\\
0 & 0 & -s_{n} & 0 & 0 & c_{n}
\end{array}\right)\nonumber \\
 & =R(\theta_{1})\diamond...\diamond R(\theta_{n})\,,
\end{align}
 where Eqs.\,(\ref{eq:jxi1})-(\ref{eq:jet}) are used again in the
4th equal sign. 
\end{proof}

\section{Horizontal polar decomposition of symplectic matrices \label{sec:Horizontal}}

In this section, we develop the second tool, horizontal polar decomposition
of symplectic matrices, for the purpose of proving Theorem \ref{thm:stable}.
Judging from its name, horizontal polar decomposition bears some resemblance
to the familiar matrix polar decomposition. The adjective ``horizontal''
is of course related to the natures of even-dimensional symplectic
vector spaces. To understand its construction, we start from Lagrangian
subspaces \citep{deGosson06}. 

The standard symplectic matrix $J$ given by Eq.\,(\ref{J}) defines
a 2-form on $\mathbb{R}^{2n}$, 
\begin{equation}
\sigma(z_{1},z_{2})=z_{2}^{\dagger}Jz_{1}\,.
\end{equation}
The space $\mathbb{R}^{2n}$ equipped with $\sigma$ is called the
standard symplectic space $\left(\mathbb{R}^{2n},\sigma\right).$ 

Recall that a subspace $l\subset$$\left(\mathbb{R}^{2n},\sigma\right)$
is called a Lagrangian plane or Lagrangian subspace if it has dimension
$n$ and $\sigma(z_{1},z_{2})=0$ for all $z_{1},z_{2}\in l.$ The
space of all Lagrangian planes is the Lagrangian Grassmannian denoted
by $Lag(2n).$ Two special Lagrangian planes are the horizontal plan
$l_{x}=\mathbb{R}^{n}\times0$ and the vertical plane $l_{p}=0\times\mathbb{R}^{n}$.
Symplectic matrices act on $Lag(2n)$, i.e., for $M\in Sp(2n,\mathbb{R})$
and $l\in Lag(2n),$ $Ml\in Lag(2n).$ The following lemma highlights
the role of the subgroup $Sp(2n,\mathbb{R})\cap O(2n,\mathbb{R})\backsimeq U(n)$
in the action of symplectic group on the Lagrangian Grassmannian. 
\begin{lem}
\label{lem:ulag}The action of $Sp(2n,\mathbb{R})\cap O(2n,\mathbb{R})\backsimeq U(n)$
on $Lag(2n)$ is transitive, i.e., for every pairs of $\left\{ l_{1},l_{2}\right\} \subset Lag(2n),$
there exists a $u\in Sp(2n,\mathbb{R})\cap O(2n,\mathbb{R})\backsimeq U(n)$
such that $l_{2}=ul_{1}.$
\end{lem}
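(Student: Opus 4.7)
The plan is to reduce the statement to a single orbit calculation: if I can show that every Lagrangian plane lies in the orbit of the fixed reference plane $l_x=\mathbb{R}^n\times 0$ under $Sp(2n,\mathbb{R})\cap O(2n,\mathbb{R})$, then for any pair $\{l_1,l_2\}$ I can find $u_1,u_2$ with $l_i=u_il_x$ and take $u=u_2u_1^{-1}$, which still lies in the subgroup because it is a (closed) group. So the real work is showing each $l\in Lag(2n)$ is of the form $ul_x$ for some $u\in Sp(2n,\mathbb{R})\cap O(2n,\mathbb{R})$.

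Next I would exploit the concrete block form established earlier in the paper: every $u\in Sp(2n,\mathbb{R})\cap O(2n,\mathbb{R})$ can be written as
\begin{equation}
u=\begin{pmatrix} A & B \\ -B & A \end{pmatrix},\qquad A+iB\in U(n).
\end{equation}
Given an arbitrary $l\in Lag(2n)$, pick a Euclidean orthonormal basis $f_1,\ldots,f_n$ of $l$ and split each vector according to its configuration and momentum components, $f_k=(a_k,-b_k)^{\dagger}$. Define $A$ to be the real $n\times n$ matrix with columns $a_k$ and $B$ the one with columns $b_k$, and assemble $u$ as above. Then $ul_x$ is the span of the first $n$ columns of $u$, namely the $f_k$, so $ul_x=l$ by construction.

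The only thing left to verify is that this $u$ really sits in $Sp(2n,\mathbb{R})\cap O(2n,\mathbb{R})$, equivalently that $A+iB\in U(n)$. Orthonormality of the $f_k$'s in $\mathbb{R}^{2n}$ gives $A^{\dagger}A+B^{\dagger}B=I_n$, while the Lagrangian condition $f_l^{\dagger}Jf_k=0$ unpacks (using $J=\begin{pmatrix}0 & I\\-I & 0\end{pmatrix}$) to $-a_l^{\dagger}b_k+b_l^{\dagger}a_k=0$ for every $k,l$, i.e.\ $A^{\dagger}B$ is symmetric. A direct computation of $(A-iB)^{\dagger}(A+iB)=A^{\dagger}A+B^{\dagger}B+i(A^{\dagger}B-B^{\dagger}A)$ then collapses to $I_n$, so $A+iB\in U(n)$ and $u$ is in the claimed subgroup.

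I do not expect any serious obstacle here; the content is really a bookkeeping exercise matching the Lagrangian condition to the unitarity of $A+iB$. The one delicate point is simply remembering that the Euclidean orthonormal basis of $l$ must be chosen first, and that one must check that the Lagrangian sigma-orthogonality condition and Euclidean orthonormality translate \emph{simultaneously} into the two defining identities for a unitary $n\times n$ matrix — which is exactly what makes the identification $Lag(2n)\cong U(n)/O(n)$ work under the hood.
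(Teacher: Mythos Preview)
Your argument is correct and is essentially a concrete implementation of the approach the paper only sketches: the paper merely says the lemma follows by constructing ortho-symplectic bases of $(\mathbb{R}^{2n},\sigma)$ adapted to $l_1$ and $l_2$ and refers to de~Gosson for details, whereas you carry this out explicitly by choosing a Euclidean orthonormal basis of $l$, packaging it into the block matrix $u$, and verifying $A+iB\in U(n)$ directly from the Lagrangian and orthonormality conditions. The two routes are the same in spirit---your columns $(a_k,-b_k)^{\dagger}$ together with $-J(a_k,-b_k)^{\dagger}=(b_k,a_k)^{\dagger}$ are precisely an ortho-symplectic basis over $l$---but your version has the virtue of being self-contained and computationally transparent, at the cost of being tied to the standard coordinates rather than the basis-free language of de~Gosson.
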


The lemma can be proved by establishing othor-symplectic bases for
$\left(\mathbb{R}^{2n},\sigma\right)$ over $l_{1}$ and $l_{2}$.
See Ref.~\citep{deGosson06} for details. For a given $l\in Lag(2n),$
all $S\in Sp(2n,\mathbb{R})$ that satisfy $Sl=l$ form a subgroup.
It is called the stabilizer or isotropy subgroup of $l,$ and is denoted
by $St(l).$ For the vertical Lagrangian subsapce $l_{p}=0\times\mathbb{R}^{n},$
its stabilizer must be of the form 
\begin{equation}
S=\left(\begin{array}{cc}
L & 0\\
Q & D
\end{array}\right)\,.
\end{equation}
 Furthermore, the symplectic condition requires that $L^{\dagger}Q$
is symmetric and $D=L^{\dagger-1}$, which reduce $S$ to 
\begin{equation}
S=\left(\begin{array}{cc}
L & 0\\
Q & L^{\dagger-1}
\end{array}\right)=\left(\begin{array}{cc}
I & 0\\
P & I
\end{array}\right)\left(\begin{array}{cc}
L & 0\\
0 & L^{\dagger-1}
\end{array}\right)\,,\label{eq:Mst}
\end{equation}
where $P=QL^{-1}$ is symmetric. For any $M\in Sp(2n,\mathbb{R})$,
let's consider the Lagrangian subspaces $M^{-1}l_{p}$ and $l_{p}$.
By Lemma \ref{lem:ulag}, there must exit a $u\in Sp(2n,\mathbb{R})\cap O(2n,\mathbb{R})\backsimeq U(n)$
such that $M^{-1}l_{p}=ul_{p},$ which implies $u^{-1}M^{-1}\in St(l_{p}),$
or, there must exist a $S\in St(l_{p})$ such that $M^{-1}=uS.$ Equivalently,
there must exist a $u^{\prime}\in Sp(2n,\mathbb{R})\cap O(2n,\mathbb{R})\backsimeq U(n)$
and a $S^{\prime}\in St(l_{p})$ such that $M=u^{\prime}S^{\prime}.$
Since $S^{\prime}$ must have the form in Eq.\,(\ref{eq:Mst}), we
conclude that $M$ can always be decomposed as 
\begin{equation}
M=\left(\begin{array}{cc}
X & Y\\
-Y & X
\end{array}\right)\left(\begin{array}{cc}
L & 0\\
Q & L^{\dagger-1}
\end{array}\right)\,,\label{eq:preIw}
\end{equation}
where $L^{\dagger}Q$ is symmetric. Following de Gosson \citep{deGosson06},
this decomposition is called pre-Iwasawa decomposition. The justification
of this terminology will be given shortly. One important fact to realize
is that the pre-Iwasawa decomposition of a symplectic matrix by $Sp(2n,\mathbb{R})\cap O(2n,\mathbb{R})\backsimeq U(n)$
and $St(l_{p})$ is not unique. For any given pre-Iwasawa decomposition
$M=uS$ with $u\in U(n)$ and $S\in St(l_{p}),$ a family of decompositions
can be constructed using $O(n,\mathbb{R})$ as $M=u^{\prime}S^{\prime},$
where 
\begin{align}
u^{\prime} & =ug^{-1}\in Sp(2n,\mathbb{R})\cap O(2n,\mathbb{R})\backsimeq U(n)\,,\\
S^{\prime} & =gS\in St(l_{p})\,,\\
g & =\left(\begin{array}{cc}
c & 0\\
0 & c
\end{array}\right)\,,\,\,\,c\in O(n,\mathbb{R})\,.
\end{align}
This family of pre-Iwasawa decompositions is generated by a $O(n,\mathbb{R})$
gauge freedom. One way to fix the gauge freedom is to demand $L$
to be positive-definite. Different choice of $c\in O(n,\mathbb{R})$
in the family of pre-Iwasawa decomposition corresponds to different
$L$ in Eq.\,(\ref{eq:preIw}). Since the polar decomposition of
a matrix into a positive matrix and a rotation is unique, the pre-Iwasawa
decomposition is unique when the $n\times n$ matrix $L$ is required
to be positive-definite. Because $L$ corresponds to the horizontal
components of $z$, such a requirement demands horizontal positive-definiteness.
We call this special pre-Iwasawa decomposition horizontal polar decomposition
of a symplectic matrix. Vertical polar decomposition can be defined
similarly. 
\begin{thm}
\label{thm:horizontal}(Horizontal polar decomposition) A symplectic
matrix $M\in Sp(2n,\mathbb{R})$ can be uniquely decomposed as
\begin{equation}
M=\left(\begin{array}{cc}
X & Y\\
-Y & X
\end{array}\right)\left(\begin{array}{cc}
L & 0\\
Q & L^{-1}
\end{array}\right)\,,
\end{equation}
where $\left(\begin{array}{cc}
X & Y\\
-Y & X
\end{array}\right)\in Sp(2n,\mathbb{R})\cap O(2n,\mathbb{R})\backsimeq U(n)$, $L$ is positive-definite, and $L^{\dagger}Q$ is symmetric. 
\end{thm}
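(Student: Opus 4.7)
The plan is to promote the pre-Iwasawa construction already sketched in the paragraphs preceding the theorem into a clean existence statement, and then establish uniqueness by pinning down the intersection of $Sp(2n,\mathbb{R})\cap O(2n,\mathbb{R})$ with the stabilizer $St(l_{p})$.

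For existence, I would start from the pre-Iwasawa decomposition $M=u'S'$ supplied by Lemma~\ref{lem:ulag} applied to the pair $\{l_{p},M^{-1}l_{p}\}$; this produces a factorization of the shape stated in the theorem but with the lower-right block a possibly indefinite invertible $L'$. To enforce horizontal positive-definiteness, I would apply the standard real polar decomposition $L'=cL$ with $c\in O(n,\mathbb{R})$ and $L$ positive-definite symmetric, and absorb $c$ into the unitary block through the gauge matrix $g=\mathrm{diag}(c,c)$, which lies in $Sp(2n,\mathbb{R})\cap O(2n,\mathbb{R})$. Rewriting $M=(u'g)(g^{-1}S')$ and using $c^{\dagger}=c^{-1}$ together with the symmetry of $L$, a short block calculation shows that $g^{-1}S'$ has upper right block zero, $(1,1)$ block equal to the positive-definite $L$, $(2,2)$ block equal to $L^{-1}$, and that the symmetry of $L^{\dagger}Q$ is preserved along the gauge transformation, producing a decomposition of the claimed form.

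For uniqueness, I would suppose $M=u_{1}S_{1}=u_{2}S_{2}$ are two horizontal polar decompositions and rearrange to $u_{2}^{-1}u_{1}=S_{2}S_{1}^{-1}\in \bigl(Sp(2n,\mathbb{R})\cap O(2n,\mathbb{R})\bigr)\cap St(l_{p})$. The heart of the argument is to identify this intersection with $\{\mathrm{diag}(c,c):c\in O(n,\mathbb{R})\}$: starting from the general form of $St(l_{p})$ in Eq.~\eqref{eq:Mst}, the orthogonality condition forces the off-diagonal $Q$ block to vanish and $L$ to be orthogonal, while the block structure of $Sp(2n,\mathbb{R})\cap O(2n,\mathbb{R})$ then reduces the whole matrix to the diagonal form. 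Reading off components gives $L_{2}=cL_{1}$ for some $c\in O(n,\mathbb{R})$, and therefore $L_{2}^{\dagger}L_{2}=L_{1}^{\dagger}c^{\dagger}cL_{1}=L_{1}^{\dagger}L_{1}$; uniqueness of the positive-definite square root of a positive operator yields $L_{1}=L_{2}$, hence $c=I$, $S_{1}=S_{2}$, and $u_{1}=u_{2}$.

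I expect the main obstacle to be the careful characterization of $\bigl(Sp(2n,\mathbb{R})\cap O(2n,\mathbb{R})\bigr)\cap St(l_{p})$, together with the bookkeeping that certifies the gauge factor $g$ actually converts one pre-Iwasawa decomposition into another of the required form; everything else reduces to the classical uniqueness of the polar decomposition of an invertible real matrix. The existence half is essentially already contained in the discussion preceding the theorem, so the real new content of the proof lies on the uniqueness side and on exhibiting horizontal positive-definiteness as the precise condition that removes the residual $O(n,\mathbb{R})$ gauge freedom.
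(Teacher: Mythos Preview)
Your proposal is correct and follows essentially the same route as the paper. Existence is handled identically (pre-Iwasawa factorization from Lemma~\ref{lem:ulag}, then the $O(n,\mathbb{R})$ gauge freedom is fixed by polar-decomposing the $L$-block), and your uniqueness argument---characterizing $(Sp(2n,\mathbb{R})\cap O(2n,\mathbb{R}))\cap St(l_{p})$ as $\{\mathrm{diag}(c,c):c\in O(n,\mathbb{R})\}$ and then forcing $c=I$ via uniqueness of the positive square root---is just a mild repackaging of the paper's direct block computation, which reads off $LL'^{-1}=L^{-1}L'$ from the two diagonal blocks of $S S'^{-1}\in U(n)$ to get $L^{2}=L'^{2}$ and hence $L=L'$.
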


\begin{proof}
The existence of the decomposition has been established by the pre-Iwasawa
decomposition derived above. We only need to prove the uniqueness.
Assume there are two horizontal polar decompositions, $M=uS=u^{\prime}S^{\prime}.$
We have $SS^{\prime-1}=u^{-1}u^{\prime}\in Sp(2n,\mathbb{R})\cap O(2n,\mathbb{R})\backsimeq U(n),$
or, 
\begin{equation}
\left(\begin{array}{cc}
L & 0\\
Q & L^{-1}
\end{array}\right)\left(\begin{array}{cc}
L^{\prime-1} & 0\\
-Q^{\prime\dagger} & L^{\prime}
\end{array}\right)=\left(\begin{array}{cc}
LL^{\prime-1} & 0\\
QL^{\prime-1}-L^{-1}Q^{\prime\dagger} & L^{-1}L^{\prime}
\end{array}\right)=\left(\begin{array}{cc}
A & B\\
-B & A
\end{array}\right)\,.
\end{equation}
Then, $LL^{\prime-1}=L^{-1}L^{\prime}$, or, $L^{2}=L^{\prime2}.$
Since $L$ and $L^{\prime}$ are positive-definite, $L=L^{\prime}.$
In addition, $QL^{\prime-1}=L^{-1}Q^{\prime\dagger},$ i.e., $LQ=Q^{\prime\dagger}L^{\prime},$
which gives $Q=Q^{\prime},$ considering the symplectic condition
$Q^{\prime\dagger}L^{\prime}=L^{\prime}Q^{\prime}.$ 
\end{proof}
Similar to the standard polar decomposition of an invertible square
matrix, there is an explicit formula for the horizontal polar decomposition
of a symplectic matrix in terms of its block components. 
\begin{cor}
The horizontal polar decomposition of a symplectic matrix 
\begin{equation}
M=\left(\begin{array}{cc}
A & B\\
C & D
\end{array}\right)
\end{equation}
is given by 
\begin{align}
M & =\left(\begin{array}{cc}
X & Y\\
-Y & X
\end{array}\right)\left(\begin{array}{cc}
L & 0\\
Q & L^{-1}
\end{array}\right)\,,\label{eq:Shor}\\
X & =DL\,,\,\,\,Y=BL\,,\label{eq:ShorXY}\\
L & =\left(B^{\dagger}B+D^{\dagger}D\right)^{-1/2}\,,\label{eq:ShorL}\\
Q & =L(B^{\dagger}A+D^{\dagger}C)\,.\label{eq:ShorQ}
\end{align}
\end{cor}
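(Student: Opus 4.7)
\bigskip
\noindent\textbf{Proof proposal.}
Since Theorem \ref{thm:horizontal} already guarantees that a horizontal polar decomposition exists and is unique, the plan is simply to derive the formulas by matching block entries and solving the resulting system using the symplectic identities of $M$. I would begin by multiplying out
\begin{equation*}
\begin{pmatrix} X & Y \\ -Y & X \end{pmatrix}\begin{pmatrix} L & 0 \\ Q & L^{-1} \end{pmatrix} \;=\; \begin{pmatrix} XL+YQ & YL^{-1} \\ -YL+XQ & XL^{-1} \end{pmatrix}
\end{equation*}
and comparing with $M$. The right column immediately yields $Y=BL$ and $X=DL$, so $X$ and $Y$ are determined once $L$ is known.

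Next, I would use the orthosymplectic constraint $X^{\dagger}X+Y^{\dagger}Y=I$ for the left factor. Substituting $X=DL,\,Y=BL$ and using $L^{\dagger}=L$ gives $L(B^{\dagger}B+D^{\dagger}D)L=I$, and since $L$ is required positive-definite the formula $L=(B^{\dagger}B+D^{\dagger}D)^{-1/2}$ follows. Here I would include a brief remark that $B^{\dagger}B+D^{\dagger}D$ is indeed positive-definite: any vector $v$ in its kernel satisfies $Bv=Dv=0$, so $M(0,v)^{\dagger}=0$, contradicting invertibility of the symplectic $M$.

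The remaining blocks $A=DL^{2}+BLQ$ and $C=-BL^{2}+DLQ$ form a linear system for $Q$. I would eliminate the inhomogeneous pieces by multiplying the first equation on the left by $B^{\dagger}$, the second by $D^{\dagger}$, and adding:
\begin{equation*}
(B^{\dagger}B+D^{\dagger}D)LQ \;=\; B^{\dagger}A+D^{\dagger}C - (B^{\dagger}D-D^{\dagger}B)L^{2}.
\end{equation*}
The last term vanishes by the symplectic identity $B^{\dagger}D=D^{\dagger}B$, and the prefactor on the left is $L^{-2}$, so $Q=L(B^{\dagger}A+D^{\dagger}C)$, as claimed.

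The only potentially subtle point is to verify that the other required algebraic conditions hold automatically: the remaining orthosymplectic relation $X^{\dagger}Y=Y^{\dagger}X$ reduces to $LD^{\dagger}BL=LB^{\dagger}DL$, which is again $B^{\dagger}D=D^{\dagger}B$; and the symmetry of $L^{\dagger}Q$ (needed so that the lower factor belongs to $St(l_{p})$) is forced by the uniqueness already proved in Theorem \ref{thm:horizontal}. If one prefers a standalone verification, $L^{\dagger}Q=Q^{\dagger}L$ can be checked by the same ``multiply-and-add'' trick using $A^{\dagger}C=C^{\dagger}A$ and $A^{\dagger}D-C^{\dagger}B=I$. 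I expect this last bookkeeping to be the only step that requires a bit of care; everything else is a direct block computation.
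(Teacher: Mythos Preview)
Your proposal is correct and follows essentially the same route as the paper: multiply out the block product, read off $X=DL,\,Y=BL$ from the right column, determine $L$ from $X^{\dagger}X+Y^{\dagger}Y=I$, and obtain $Q$ by a multiply-and-add elimination. The only cosmetic difference is that in the $Q$ step the paper multiplies the $A$- and $C$-equations by $Y^{\dagger}$ and $X^{\dagger}$ (invoking $Y^{\dagger}X=X^{\dagger}Y$), whereas you multiply by $B^{\dagger}$ and $D^{\dagger}$ (invoking $B^{\dagger}D=D^{\dagger}B$); since $X=DL$, $Y=BL$ with $L$ symmetric, these are the same identity and the same computation up to a factor of $L$.
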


\begin{proof}
Carrying out the matrix multiplication on the right-hand-side of Eq.\,(\ref{eq:Shor}),
we have, block by block, 
\begin{align}
A & =XL+YQ\,,\label{eq:ShorA}\\
C & =-YL+XQ\label{eq:ShorC}\\
B & =YL^{-1}\,,\,\,\,D=XL^{-1}\,.\label{eq:ShorBD}
\end{align}
From Eq.\,(\ref{eq:ShorBD}), we have Eq.\,(\ref{eq:ShorXY}) and
\begin{equation}
B^{\dagger}B+D^{\dagger}D=L^{-1}(X^{\dagger}X+Y^{\dagger}Y)L^{-1}=L^{-2},
\end{equation}
 where use is made of $X^{\dagger}X+Y^{\dagger}Y=I.$ Since $L$ is
positive-definite, it's square-root is unique, and Eq.\,(\ref{eq:ShorL})
follows. Equations (\ref{eq:ShorXY}) and (\ref{eq:ShorL}) confirm
that $\left(\begin{array}{cc}
X & Y\\
-Y & X
\end{array}\right)$ indeed belongs to $U(n).$ Summing $X^{\dagger}$Eq.\,(\ref{eq:ShorC})
and $Y^{\dagger}$Eq.\,(\ref{eq:ShorA}) gives 
\begin{equation}
X^{\dagger}C+Y^{\dagger}A=\left(Y^{\dagger}X-X^{\dagger}Y\right)L+\left(Y^{\dagger}Y+X^{\dagger}X\right)Q=Q\,,
\end{equation}
which proves Eq.\,(\ref{eq:ShorQ}). It is straightforward to verify
that Eqs.\,(\ref{eq:ShorA}) and (\ref{eq:ShorC}) hold when $X$,
$Y,$ $L,$ $Q$ are specified by Eqs.\,(\ref{eq:ShorXY})-(\ref{eq:ShorQ})
.
\end{proof}
Horizontal polar decomposition is not the only choice to fix the $O(n,\mathbb{R})$
gauge freedom in the pre-Iwasawa decomposition. Another possibility
is to require $L$ assume the form of 
\begin{equation}
L=\left(\begin{array}{cccc}
e^{\alpha_{1}} & l_{12} & \ldots & l_{1n}\\
0 & e^{\alpha_{2}} & \ldots & l_{2n}\\
0 & 0 & \ddots & \vdots\\
0 & 0 & 0 & e^{\alpha_{n}}
\end{array}\right)\,.
\end{equation}
Under this restriction, the decomposition is unique and this is the
well-known Iwasawa decomposition for symplectic matrices. The Iwasawa
decomposition is a general result for all Lie groups \citep{Iwasawa49}.
However, it is not found relevant to the present study, except that
it justify the terminology of pre-Iwasawa decomposition in the form
of Eq.\,(\ref{eq:preIw}) \citep{deGosson06}. 

\section{Proof of Theorem 2 \label{sec:Proof 2}}

We are now ready to prove Theorem \ref{thm:stable}, by invoking the
normal forms for stable symplectic matrices and horizontal polar decomposition
for symplectic matrices. 

The sufficiency of Theorem \ref{thm:stable} follows from the expression
of the solution map given by Eq.\,(\ref{M}) for system (\ref{zdot})
in Theorem \ref{thm:sol}, and the associated gauge freedom. For the
matrix $S$ at $t=0,$ consider the gauge transformation induced by
a $c_{0}\in O(n,\mathbb{R}),$
\begin{align}
\tilde{S}_{0} & =g_{0}S_{0}=\left(\begin{array}{cc}
\tilde{w}_{o}^{\dagger-1} & 0\\
(\tilde{w}_{0}R-\dot{\tilde{w}}_{0})m & \tilde{w}_{0}
\end{array}\right)\,,\\
g_{0} & =\left(\begin{array}{cc}
c_{0} & 0\\
0 & c_{0}
\end{array}\right)\in Sp(2n,\mathbb{R})\cap O(2n,\mathbb{R})\backsimeq U(n)\,,\\
\tilde{w}_{0} & =c_{0}w_{0}\,,\,\,\,w_{0}=w(t=0)\,.
\end{align}
Specifically, we select $c_{0}=\sqrt{w_{0}^{\dagger}w_{0}}w_{0}^{-1}$
such that $\tilde{w}_{0}=\sqrt{w_{0}^{\dagger}w_{0}}$ is positive-definite.
Clearly, the gauge transformation amounts to the polar decomposition
of $w_{0}$ and makes $\tilde{S}_{0}$ horizontally positive-definite.
A similar gauge transformation is applied at $t=T$ using $c_{T}=\sqrt{w_{T}^{\dagger}w_{T}}w_{T}^{-1}$,
\begin{align}
\tilde{S}_{T} & =g_{T}S_{T}=\left(\begin{array}{cc}
\tilde{w}_{T}^{\dagger-1} & 0\\
(\tilde{w}_{T}R-\dot{\tilde{w}}_{T})m & \tilde{w}_{T}
\end{array}\right)\,,\label{eq:STt}\\
g_{T} & =\left(\begin{array}{cc}
c_{T} & 0\\
0 & c_{T}
\end{array}\right)\in Sp(2n,\mathbb{R})\cap O(2n,\mathbb{R})\backsimeq U(n)\,,\\
\tilde{w}_{T} & =c_{T}w_{T}\,,\,\,\,w_{T}=w(t=T)\,.
\end{align}
 With these two gauge transformations at $t=0$ and $T$, the one-period
solution map for system (\ref{zdot}) is 
\begin{equation}
M(T)=\tilde{S}_{T}^{-1}g_{T}P_{T}g_{0}^{-1}\tilde{S}_{0}\,.
\end{equation}
If the envelope equation (\ref{w}) admits a solution $w(t)$ such
that $w_{T}^{\dagger}w_{T}=w_{0}^{\dagger}w_{0}$ and $S_{0}$ is
symplectic, then $M(T)$ is stable because 
\begin{equation}
M^{l}(T)=\tilde{S}_{0}^{-1}\left(g_{T}P_{T}g_{0}^{-1}\right)^{l}\tilde{S}_{0}\,
\end{equation}
 and $g_{T}P_{T}g_{0}^{-1}\in Sp(2n,\mathbb{R})\cap O(2n,\mathbb{R})\backsimeq U(n).$
The sufficiency of Theorem \ref{thm:stable} is proved.

For necessity, assume $M(T)$ is stable. By Theorem \ref{thm:normal},
$M(T)$ can be written as 
\begin{equation}
M(T)=F^{-1}NF\,,
\end{equation}
where $N=R(\theta_{1})\diamond R(\theta_{2})...\diamond R(\theta_{n})\in Sp(2n,\mathbb{R})\cap O(2n,\mathbb{R})\backsimeq U(n)$
and $F\in Sp(2n,\mathbb{R})$. Let the horizontal polar decomposition
of $F$ is 
\begin{equation}
F=P_{F}S_{F}\,.
\end{equation}
 Thus,
\begin{equation}
M(T)=S_{F}^{-1}P_{F}^{-1}NP_{F}S_{F}\,.\label{eq:MTSP}
\end{equation}
We choose initial conditions $w_{0}$ and $\dot{w}_{0}$ such that
\begin{equation}
\tilde{S}_{0}=S_{0}=\left(\begin{array}{cc}
w_{o}^{\dagger-1} & 0\\
(w_{0}R-\dot{w}_{0})m & w_{0}
\end{array}\right)=S_{F}\,,
\end{equation}
which can always be accomplished as $m$ is invertible. Note that
both $S_{F}$ and $S_{0}$ belong to $St(l_{p})$ and are horizontally
positive-definite. This choice of initial conditions $w_{0}$ and
$\dot{w}_{0}$ uniquely determines the dynamics of the envelope matrix
$w(t)$ as well as the solution map, 
\begin{equation}
M(T)=\tilde{S}_{T}^{-1}g_{T}P_{T}S_{0}=\tilde{S}_{T}^{-1}g_{T}P_{T}S_{F}\,.\label{eq:MTSP2}
\end{equation}
Equations (\ref{eq:MTSP}) and (\ref{eq:MTSP2}) show that 
\begin{equation}
\tilde{S}_{T}^{-1}g_{T}P_{T}=S_{F}^{-1}P_{F}^{-1}NP_{F}\,,
\end{equation}
or
\begin{equation}
(g_{T}P_{T})^{-1}\tilde{S}_{T}=(P_{F}^{-1}NP_{F})^{-1}S_{F}\,.
\end{equation}
By construction in Eq.\,(\ref{eq:STt}), $\tilde{S}_{T}$ belongs
to $St(l_{p})$ and is horizontally positive-definite. In addition,
both $(g_{T}P_{T})^{-1}$ and $(P_{F}^{-1}NP_{F})^{-1}$ are in $Sp(2n,\mathbb{R})\cap O(2n,\mathbb{R})\backsimeq U(n).$
By the uniqueness of horizontal polar decomposition, 
\begin{equation}
\tilde{S}_{T}=S_{F}=S_{0}=\tilde{S}_{0}\,.
\end{equation}
Therefore, the envelope matrix $w(t)$ determined by the initial conditions
$w_{0}$ and $\dot{w}_{0}$ satisfy the requirement that $\sqrt{w^{\dagger}w}$
is periodic with periodicity $T$ and $S_{0}$ is symplectic. 

This completes the proof of Theorem \ref{thm:stable}.

\section{Conclusions and future work \label{sec:Conclusions}}

Linear Hamiltonian systems with periodic coefficients have many important
applications in physics and nonlinear Hamiltonian dynamics. One of
the key issues is the stability of the systems. In this paper, we
have established in Theorem \ref{thm:stable} a necessary and sufficient
condition for the stability of the systems, in terms of solutions
of an associated matrix envelope equation. The envelope matrix governed
by the envelope equation plays a central role in determining the dynamic
properties of the linear Hamiltonian systems. Specifically, the envelope
matrix is the most important building block of the solution map given
by Theorem \ref{thm:sol}; it encapsulates the slow dynamics of the
envelope of the fast oscillation, when the dynamics has a time-scale
separation; it also controls how the fast dynamics evolves. 

Three tools are utilized in the study. The method of time-dependent
canonical transformation is used to construct the solution map and
derive the envelope equation. The normal forms for stable symplectic
matrices (Theorem \ref{thm:normal}) and horizontal polar decomposition
for symplectic matrices (Theorem \ref{thm:horizontal}) are developed
to prove the necessary and sufficient condition for stability. These
tools systematically decompose the dynamics of linear Hamiltonian
systems with time-dependent coefficients, and are expected to be effective
in other studies as well, such as those on quantum algorithms for
classical Hamiltonian systems. Relevant results will be reported in
future publications. 
\begin{acknowledgments}
I would like to acknowledge fruitful discussions on the subject studied
in this paper with collaborators, colleagues, and friends, including
Joshua Burby, Moses Chung, Robert Dewar, Nathaniel Fisch, Vasily Gelfreich,
Alexander Glasser, Maurice de Gosson, Oleg Kirillov, Melvin Leok,
Yiming Long, Robert MacKay, Richard Montgomery, Philip Morrison, Yuan
Shi, Jianyuan Xiao, Ruili Zhang, and Chaofeng Zhu. Especially, I would
like to thank Profs. Yiming Long, Chaofeng Zhu, and Richard Montgomery
for detailed discussions on normal forms for symplectic matrices,
and Prof. Maurice de Gosson for detailed discussion on pre-Iwasawa
decomposition and horizontal polar decomposition. This paper was presented
in a Lunch with Hamilton Seminar on September 12, 2018 at the Mathematical
Sciences Research Institute, as a part of the Program on Hamiltonian
systems, from topology to applications through analysis. I would like
to thank Prof. Philip Morrison for the invitation and Prof. Amitava
Bhattacharjee for the support to participate in the Program. This
research was supported by the U.S. Department of Energy (DE-AC02-09CH11466). 

Finally, I would like to dedicate this paper to the late Prof. Ronald
C. Davidson. 
\end{acknowledgments}

\bibliographystyle{apsrev4-1}
\bibliography{lh}

%merlin.mbs apsrev4-1.bst 2010-07-25 4.21a (PWD, AO, DPC) hacked
%Control: key (0)
%Control: author (72) initials jnrlst
%Control: editor formatted (1) identically to author
%Control: production of article title (-1) disabled
%Control: page (0) single
%Control: year (1) truncated
%Control: production of eprint (0) enabled
\begin{thebibliography}{42}%
\makeatletter
\providecommand \@ifxundefined [1]{%
 \@ifx{#1\undefined}
}%
\providecommand \@ifnum [1]{%
 \ifnum #1\expandafter \@firstoftwo
 \else \expandafter \@secondoftwo
 \fi
}%
\providecommand \@ifx [1]{%
 \ifx #1\expandafter \@firstoftwo
 \else \expandafter \@secondoftwo
 \fi
}%
\providecommand \natexlab [1]{#1}%
\providecommand \enquote  [1]{``#1''}%
\providecommand \bibnamefont  [1]{#1}%
\providecommand \bibfnamefont [1]{#1}%
\providecommand \citenamefont [1]{#1}%
\providecommand \href@noop [0]{\@secondoftwo}%
\providecommand \href [0]{\begingroup \@sanitize@url \@href}%
\providecommand \@href[1]{\@@startlink{#1}\@@href}%
\providecommand \@@href[1]{\endgroup#1\@@endlink}%
\providecommand \@sanitize@url [0]{\catcode `\\12\catcode `\$12\catcode
  `\&12\catcode `\#12\catcode `\^12\catcode `\_12\catcode `\%12\relax}%
\providecommand \@@startlink[1]{}%
\providecommand \@@endlink[0]{}%
\providecommand \url  [0]{\begingroup\@sanitize@url \@url }%
\providecommand \@url [1]{\endgroup\@href {#1}{\urlprefix }}%
\providecommand \urlprefix  [0]{URL }%
\providecommand \Eprint [0]{\href }%
\providecommand \doibase [0]{http://dx.doi.org/}%
\providecommand \selectlanguage [0]{\@gobble}%
\providecommand \bibinfo  [0]{\@secondoftwo}%
\providecommand \bibfield  [0]{\@secondoftwo}%
\providecommand \translation [1]{[#1]}%
\providecommand \BibitemOpen [0]{}%
\providecommand \bibitemStop [0]{}%
\providecommand \bibitemNoStop [0]{.\EOS\space}%
\providecommand \EOS [0]{\spacefactor3000\relax}%
\providecommand \BibitemShut  [1]{\csname bibitem#1\endcsname}%
\let\auto@bib@innerbib\@empty
%</preamble>
\bibitem [{\citenamefont {Qin}\ and\ \citenamefont
  {Davidson}(2009{\natexlab{a}})}]{Qin09-NA}%
  \BibitemOpen
  \bibfield  {author} {\bibinfo {author} {\bibfnamefont {H.}~\bibnamefont
  {Qin}}\ and\ \bibinfo {author} {\bibfnamefont {R.~C.}\ \bibnamefont
  {Davidson}},\ }\href {\doibase 10.1103/physrevstab.12.064001} {\bibfield
  {journal} {\bibinfo  {journal} {Physical Review Special Topics - Accelerators
  and Beams}\ }\textbf {\bibinfo {volume} {12}},\ \bibinfo {pages} {064001}
  (\bibinfo {year} {2009}{\natexlab{a}})}\BibitemShut {NoStop}%
\bibitem [{\citenamefont {Qin}\ and\ \citenamefont
  {Davidson}(2009{\natexlab{b}})}]{Qin09PoP-NA}%
  \BibitemOpen
  \bibfield  {author} {\bibinfo {author} {\bibfnamefont {H.}~\bibnamefont
  {Qin}}\ and\ \bibinfo {author} {\bibfnamefont {R.~C.}\ \bibnamefont
  {Davidson}},\ }\href {\doibase 10.1063/1.3142472} {\bibfield  {journal}
  {\bibinfo  {journal} {Physics of Plasmas}\ }\textbf {\bibinfo {volume}
  {16}},\ \bibinfo {pages} {050705} (\bibinfo {year}
  {2009}{\natexlab{b}})}\BibitemShut {NoStop}%
\bibitem [{\citenamefont {Qin}\ \emph {et~al.}(2009)\citenamefont {Qin},
  \citenamefont {Chung},\ and\ \citenamefont {Davidson}}]{Qin09-PRL}%
  \BibitemOpen
  \bibfield  {author} {\bibinfo {author} {\bibfnamefont {H.}~\bibnamefont
  {Qin}}, \bibinfo {author} {\bibfnamefont {M.}~\bibnamefont {Chung}}, \ and\
  \bibinfo {author} {\bibfnamefont {R.~C.}\ \bibnamefont {Davidson}},\
  }\href@noop {} {\bibfield  {journal} {\bibinfo  {journal} {Physical Review
  Letters}\ }\textbf {\bibinfo {volume} {103}},\ \bibinfo {pages} {224802}
  (\bibinfo {year} {2009})}\BibitemShut {NoStop}%
\bibitem [{\citenamefont {Qin}\ \emph {et~al.}(2010)\citenamefont {Qin},
  \citenamefont {Davidson},\ and\ \citenamefont {Logan}}]{Qin10PRL}%
  \BibitemOpen
  \bibfield  {author} {\bibinfo {author} {\bibfnamefont {H.}~\bibnamefont
  {Qin}}, \bibinfo {author} {\bibfnamefont {R.~C.}\ \bibnamefont {Davidson}}, \
  and\ \bibinfo {author} {\bibfnamefont {B.~G.}\ \bibnamefont {Logan}},\
  }\href@noop {} {\bibfield  {journal} {\bibinfo  {journal} {Physical Review
  Letters}\ }\textbf {\bibinfo {volume} {104}},\ \bibinfo {pages} {254801}
  (\bibinfo {year} {2010})}\BibitemShut {NoStop}%
\bibitem [{\citenamefont {Chung}\ \emph {et~al.}(2010)\citenamefont {Chung},
  \citenamefont {Qin},\ and\ \citenamefont {Davidson}}]{Chung10}%
  \BibitemOpen
  \bibfield  {author} {\bibinfo {author} {\bibfnamefont {M.}~\bibnamefont
  {Chung}}, \bibinfo {author} {\bibfnamefont {H.}~\bibnamefont {Qin}}, \ and\
  \bibinfo {author} {\bibfnamefont {R.~C.}\ \bibnamefont {Davidson}},\
  }\href@noop {} {\bibfield  {journal} {\bibinfo  {journal} {Physics of
  Plasmas}\ }\textbf {\bibinfo {volume} {17}},\ \bibinfo {pages} {084502}
  (\bibinfo {year} {2010})}\BibitemShut {NoStop}%
\bibitem [{\citenamefont {Qin}\ and\ \citenamefont
  {Davidson}(2011)}]{Qin11-056708}%
  \BibitemOpen
  \bibfield  {author} {\bibinfo {author} {\bibfnamefont {H.}~\bibnamefont
  {Qin}}\ and\ \bibinfo {author} {\bibfnamefont {R.~C.}\ \bibnamefont
  {Davidson}},\ }\href@noop {} {\bibfield  {journal} {\bibinfo  {journal}
  {Physics of Plasmas}\ }\textbf {\bibinfo {volume} {18}},\ \bibinfo {pages}
  {056708} (\bibinfo {year} {2011})}\BibitemShut {NoStop}%
\bibitem [{\citenamefont {Qin}\ and\ \citenamefont
  {Davidson}(2013)}]{Qin13PRL}%
  \BibitemOpen
  \bibfield  {author} {\bibinfo {author} {\bibfnamefont {H.}~\bibnamefont
  {Qin}}\ and\ \bibinfo {author} {\bibfnamefont {R.~C.}\ \bibnamefont
  {Davidson}},\ }\href@noop {} {\bibfield  {journal} {\bibinfo  {journal}
  {Physical Review Letters}\ }\textbf {\bibinfo {volume} {110}},\ \bibinfo
  {pages} {064803} (\bibinfo {year} {2013})}\BibitemShut {NoStop}%
\bibitem [{\citenamefont {Qin}\ \emph {et~al.}(2013)\citenamefont {Qin},
  \citenamefont {Davidson}, \citenamefont {Chung},\ and\ \citenamefont
  {Burby}}]{Qin13PRL2}%
  \BibitemOpen
  \bibfield  {author} {\bibinfo {author} {\bibfnamefont {H.}~\bibnamefont
  {Qin}}, \bibinfo {author} {\bibfnamefont {R.~C.}\ \bibnamefont {Davidson}},
  \bibinfo {author} {\bibfnamefont {M.}~\bibnamefont {Chung}}, \ and\ \bibinfo
  {author} {\bibfnamefont {J.~W.}\ \bibnamefont {Burby}},\ }\href {\doibase
  10.1103/physrevlett.111.104801} {\bibfield  {journal} {\bibinfo  {journal}
  {Physical Review Letters}\ }\textbf {\bibinfo {volume} {111}},\ \bibinfo
  {pages} {104801} (\bibinfo {year} {2013})}\BibitemShut {NoStop}%
\bibitem [{\citenamefont {Chung}\ \emph {et~al.}(2013)\citenamefont {Chung},
  \citenamefont {Qin}, \citenamefont {Gilson},\ and\ \citenamefont
  {Davidson}}]{Chung13}%
  \BibitemOpen
  \bibfield  {author} {\bibinfo {author} {\bibfnamefont {M.}~\bibnamefont
  {Chung}}, \bibinfo {author} {\bibfnamefont {H.}~\bibnamefont {Qin}}, \bibinfo
  {author} {\bibfnamefont {E.~P.}\ \bibnamefont {Gilson}}, \ and\ \bibinfo
  {author} {\bibfnamefont {R.~C.}\ \bibnamefont {Davidson}},\ }\href@noop {}
  {\bibfield  {journal} {\bibinfo  {journal} {Physics of Plasmas}\ }\textbf
  {\bibinfo {volume} {20}},\ \bibinfo {pages} {083121} (\bibinfo {year}
  {2013})}\BibitemShut {NoStop}%
\bibitem [{\citenamefont {Qin}\ \emph {et~al.}(2014)\citenamefont {Qin},
  \citenamefont {Davidson}, \citenamefont {Burby},\ and\ \citenamefont
  {Chung}}]{Qin14-044001}%
  \BibitemOpen
  \bibfield  {author} {\bibinfo {author} {\bibfnamefont {H.}~\bibnamefont
  {Qin}}, \bibinfo {author} {\bibfnamefont {R.~C.}\ \bibnamefont {Davidson}},
  \bibinfo {author} {\bibfnamefont {J.~W.}\ \bibnamefont {Burby}}, \ and\
  \bibinfo {author} {\bibfnamefont {M.}~\bibnamefont {Chung}},\ }\href
  {\doibase 10.1103/PhysRevSTAB.17.044001} {\bibfield  {journal} {\bibinfo
  {journal} {Physical Review Special Topics - Accelerators and Beams}\ }\textbf
  {\bibinfo {volume} {17}},\ \bibinfo {pages} {044001} (\bibinfo {year}
  {2014})}\BibitemShut {NoStop}%
\bibitem [{\citenamefont {Qin}\ \emph {et~al.}(2015)\citenamefont {Qin},
  \citenamefont {Chung}, \citenamefont {Davidson},\ and\ \citenamefont
  {Burby}}]{Qin15-056702}%
  \BibitemOpen
  \bibfield  {author} {\bibinfo {author} {\bibfnamefont {H.}~\bibnamefont
  {Qin}}, \bibinfo {author} {\bibfnamefont {M.}~\bibnamefont {Chung}}, \bibinfo
  {author} {\bibfnamefont {R.~C.}\ \bibnamefont {Davidson}}, \ and\ \bibinfo
  {author} {\bibfnamefont {J.~W.}\ \bibnamefont {Burby}},\ }\href@noop {}
  {\bibfield  {journal} {\bibinfo  {journal} {Physics of Plasmas}\ }\textbf
  {\bibinfo {volume} {22}} (\bibinfo {year} {2015})}\BibitemShut {NoStop}%
\bibitem [{\citenamefont {Chung}\ \emph {et~al.}(2015)\citenamefont {Chung},
  \citenamefont {Qin}, \citenamefont {Groening}, \citenamefont {C.Davidson},\
  and\ \citenamefont {Xiao}}]{Chung15}%
  \BibitemOpen
  \bibfield  {author} {\bibinfo {author} {\bibfnamefont {M.}~\bibnamefont
  {Chung}}, \bibinfo {author} {\bibfnamefont {H.}~\bibnamefont {Qin}}, \bibinfo
  {author} {\bibfnamefont {L.}~\bibnamefont {Groening}}, \bibinfo {author}
  {\bibfnamefont {R.}~\bibnamefont {C.Davidson}}, \ and\ \bibinfo {author}
  {\bibfnamefont {C.}~\bibnamefont {Xiao}},\ }\href@noop {} {\bibfield
  {journal} {\bibinfo  {journal} {Physics of Plasmas}\ }\textbf {\bibinfo
  {volume} {22}},\ \bibinfo {pages} {013109} (\bibinfo {year}
  {2015})}\BibitemShut {NoStop}%
\bibitem [{\citenamefont {Chung}\ \emph
  {et~al.}(2016{\natexlab{a}})\citenamefont {Chung}, \citenamefont {Qin},\ and\
  \citenamefont {Davidson}}]{Chung16}%
  \BibitemOpen
  \bibfield  {author} {\bibinfo {author} {\bibfnamefont {M.}~\bibnamefont
  {Chung}}, \bibinfo {author} {\bibfnamefont {H.}~\bibnamefont {Qin}}, \ and\
  \bibinfo {author} {\bibfnamefont {R.~C.}\ \bibnamefont {Davidson}},\ }\href
  {\doibase 10.1063/1.4959112} {\bibfield  {journal} {\bibinfo  {journal}
  {Physics of Plasmas}\ }\textbf {\bibinfo {volume} {23}},\ \bibinfo {pages}
  {074507} (\bibinfo {year} {2016}{\natexlab{a}})}\BibitemShut {NoStop}%
\bibitem [{\citenamefont {Chung}\ \emph
  {et~al.}(2016{\natexlab{b}})\citenamefont {Chung}, \citenamefont {Qin},
  \citenamefont {Davidson}, \citenamefont {Groening},\ and\ \citenamefont
  {Xiao}}]{Chung16PRL}%
  \BibitemOpen
  \bibfield  {author} {\bibinfo {author} {\bibfnamefont {M.}~\bibnamefont
  {Chung}}, \bibinfo {author} {\bibfnamefont {H.}~\bibnamefont {Qin}}, \bibinfo
  {author} {\bibfnamefont {R.~C.}\ \bibnamefont {Davidson}}, \bibinfo {author}
  {\bibfnamefont {L.}~\bibnamefont {Groening}}, \ and\ \bibinfo {author}
  {\bibfnamefont {C.}~\bibnamefont {Xiao}},\ }\href {\doibase
  10.1103/physrevlett.117.224801} {\bibfield  {journal} {\bibinfo  {journal}
  {Physical Review Letters}\ }\textbf {\bibinfo {volume} {117}},\ \bibinfo
  {pages} {224801} (\bibinfo {year} {2016}{\natexlab{b}})}\BibitemShut
  {NoStop}%
\bibitem [{\citenamefont {Chung}\ and\ \citenamefont {Qin}(2018)}]{Chung18}%
  \BibitemOpen
  \bibfield  {author} {\bibinfo {author} {\bibfnamefont {M.}~\bibnamefont
  {Chung}}\ and\ \bibinfo {author} {\bibfnamefont {H.}~\bibnamefont {Qin}},\
  }\href {\doibase 10.1063/1.5018426} {\bibfield  {journal} {\bibinfo
  {journal} {Physics of Plasmas}\ }\textbf {\bibinfo {volume} {25}},\ \bibinfo
  {pages} {011605} (\bibinfo {year} {2018})}\BibitemShut {NoStop}%
\bibitem [{\citenamefont {Davidson}\ and\ \citenamefont
  {Qin}(2001)}]{Davidson01-all}%
  \BibitemOpen
  \bibfield  {author} {\bibinfo {author} {\bibfnamefont {R.~C.}\ \bibnamefont
  {Davidson}}\ and\ \bibinfo {author} {\bibfnamefont {H.}~\bibnamefont {Qin}},\
  }\href@noop {} {\emph {\bibinfo {title} {Physics of Intense Charged Particle
  Beams in High Energy Accelerators}}}\ (\bibinfo  {publisher} {Imperial
  College Press and World Scientific},\ \bibinfo {address} {Singapore},\
  \bibinfo {year} {2001})\BibitemShut {NoStop}%
\bibitem [{\citenamefont {Courant}\ and\ \citenamefont
  {Snyder}(1958)}]{Courant58}%
  \BibitemOpen
  \bibfield  {author} {\bibinfo {author} {\bibfnamefont {E.}~\bibnamefont
  {Courant}}\ and\ \bibinfo {author} {\bibfnamefont {H.}~\bibnamefont
  {Snyder}},\ }\href@noop {} {\bibfield  {journal} {\bibinfo  {journal} {Annals
  of Physics}\ }\textbf {\bibinfo {volume} {3}},\ \bibinfo {pages} {1}
  (\bibinfo {year} {1958})}\BibitemShut {NoStop}%
\bibitem [{\citenamefont {Lewis}(1968)}]{Lewis68}%
  \BibitemOpen
  \bibfield  {author} {\bibinfo {author} {\bibfnamefont {H.~R.}\ \bibnamefont
  {Lewis}},\ }\href@noop {} {\bibfield  {journal} {\bibinfo  {journal} {Journal
  of Mathematical Physics}\ }\textbf {\bibinfo {volume} {9}},\ \bibinfo {pages}
  {1976} (\bibinfo {year} {1968})}\BibitemShut {NoStop}%
\bibitem [{\citenamefont {Lewis}\ and\ \citenamefont
  {Riesenfeld}(1969)}]{Lewis69}%
  \BibitemOpen
  \bibfield  {author} {\bibinfo {author} {\bibfnamefont {H.~R.}\ \bibnamefont
  {Lewis}}\ and\ \bibinfo {author} {\bibfnamefont {W.~B.}\ \bibnamefont
  {Riesenfeld}},\ }\href@noop {} {\bibfield  {journal} {\bibinfo  {journal}
  {Journal of Mathematical Physics}\ }\textbf {\bibinfo {volume} {10}},\
  \bibinfo {pages} {1458} (\bibinfo {year} {1969})}\BibitemShut {NoStop}%
\bibitem [{\citenamefont {Ermakov}(1880)}]{Ermakov80}%
  \BibitemOpen
  \bibfield  {author} {\bibinfo {author} {\bibfnamefont {V.}~\bibnamefont
  {Ermakov}},\ }\href@noop {} {\bibfield  {journal} {\bibinfo  {journal} {Univ.
  Izv. Kiev}\ }\textbf {\bibinfo {volume} {20}},\ \bibinfo {pages} {1}
  (\bibinfo {year} {1880})}\BibitemShut {NoStop}%
\bibitem [{\citenamefont {Milne}(1930)}]{Milne30}%
  \BibitemOpen
  \bibfield  {author} {\bibinfo {author} {\bibfnamefont {W.~E.}\ \bibnamefont
  {Milne}},\ }\href@noop {} {\bibfield  {journal} {\bibinfo  {journal}
  {Physical Review}\ }\textbf {\bibinfo {volume} {35}},\ \bibinfo {pages} {863}
  (\bibinfo {year} {1930})}\BibitemShut {NoStop}%
\bibitem [{\citenamefont {Pinney}(1950)}]{Pinney50}%
  \BibitemOpen
  \bibfield  {author} {\bibinfo {author} {\bibfnamefont {E.}~\bibnamefont
  {Pinney}},\ }\href@noop {} {\bibfield  {journal} {\bibinfo  {journal}
  {Proceedings of the American Mathematical Society}\ }\textbf {\bibinfo
  {volume} {1}},\ \bibinfo {pages} {681} (\bibinfo {year} {1950})}\BibitemShut
  {NoStop}%
\bibitem [{\citenamefont {Morales}(1988)}]{Morales88}%
  \BibitemOpen
  \bibfield  {author} {\bibinfo {author} {\bibfnamefont {D.~A.}\ \bibnamefont
  {Morales}},\ }\href@noop {} {\bibfield  {journal} {\bibinfo  {journal}
  {Journal of Physics A: Mathematical and General}\ }\textbf {\bibinfo {volume}
  {21}},\ \bibinfo {pages} {L889} (\bibinfo {year} {1988})}\BibitemShut
  {NoStop}%
\bibitem [{\citenamefont {Monteoliva}\ \emph {et~al.}(1994)\citenamefont
  {Monteoliva}, \citenamefont {Korsch},\ and\ \citenamefont
  {Nunez}}]{Monteoliva94}%
  \BibitemOpen
  \bibfield  {author} {\bibinfo {author} {\bibfnamefont {D.~B.}\ \bibnamefont
  {Monteoliva}}, \bibinfo {author} {\bibfnamefont {H.~J.}\ \bibnamefont
  {Korsch}}, \ and\ \bibinfo {author} {\bibfnamefont {J.~A.}\ \bibnamefont
  {Nunez}},\ }\href@noop {} {\bibfield  {journal} {\bibinfo  {journal} {Journal
  of Physics A: Mathematical and General}\ }\textbf {\bibinfo {volume} {27}},\
  \bibinfo {pages} {6897} (\bibinfo {year} {1994})}\BibitemShut {NoStop}%
\bibitem [{\citenamefont {Berry}(1985)}]{Berry85}%
  \BibitemOpen
  \bibfield  {author} {\bibinfo {author} {\bibfnamefont {M.~V.}\ \bibnamefont
  {Berry}},\ }\href@noop {} {\bibfield  {journal} {\bibinfo  {journal} {Journal
  of Physics A: Mathematical and General}\ }\textbf {\bibinfo {volume} {18}},\
  \bibinfo {pages} {15} (\bibinfo {year} {1985})}\BibitemShut {NoStop}%
\bibitem [{\citenamefont {Qin}\ and\ \citenamefont
  {Davidson}(2006)}]{Qin06Sym}%
  \BibitemOpen
  \bibfield  {author} {\bibinfo {author} {\bibfnamefont {H.}~\bibnamefont
  {Qin}}\ and\ \bibinfo {author} {\bibfnamefont {R.~C.}\ \bibnamefont
  {Davidson}},\ }\href@noop {} {\bibfield  {journal} {\bibinfo  {journal}
  {Physical Review Special Topics - Accelerators and Beams}\ }\textbf {\bibinfo
  {volume} {9}},\ \bibinfo {pages} {054001} (\bibinfo {year}
  {2006})}\BibitemShut {NoStop}%
\bibitem [{\citenamefont {Long}(2002)}]{Long02}%
  \BibitemOpen
  \bibfield  {author} {\bibinfo {author} {\bibfnamefont {Y.}~\bibnamefont
  {Long}},\ }\href@noop {} {\emph {\bibinfo {title} {Index theory for
  symplectic paths with applications}}}\ (\bibinfo  {publisher}
  {Birkh{\"a}user},\ \bibinfo {address} {Basel},\ \bibinfo {year}
  {2002})\BibitemShut {NoStop}%
\bibitem [{\citenamefont {Williamson}(1937)}]{Williamson37}%
  \BibitemOpen
  \bibfield  {author} {\bibinfo {author} {\bibfnamefont {J.}~\bibnamefont
  {Williamson}},\ }\href@noop {} {\bibfield  {journal} {\bibinfo  {journal}
  {American Journal of Mathematics}\ }\textbf {\bibinfo {volume} {59}},\
  \bibinfo {pages} {599} (\bibinfo {year} {1937})}\BibitemShut {NoStop}%
\bibitem [{\citenamefont {Burgoyne}\ and\ \citenamefont
  {Cushman}(1974)}]{Burgoyne74}%
  \BibitemOpen
  \bibfield  {author} {\bibinfo {author} {\bibfnamefont {N.}~\bibnamefont
  {Burgoyne}}\ and\ \bibinfo {author} {\bibfnamefont {R.}~\bibnamefont
  {Cushman}},\ }\href@noop {} {\bibfield  {journal} {\bibinfo  {journal}
  {Celestial mechanics}\ }\textbf {\bibinfo {volume} {8}},\ \bibinfo {pages}
  {435} (\bibinfo {year} {1974})}\BibitemShut {NoStop}%
\bibitem [{\citenamefont {Laub}\ and\ \citenamefont {Meyer}(1974)}]{Laub74}%
  \BibitemOpen
  \bibfield  {author} {\bibinfo {author} {\bibfnamefont {A.~J.}\ \bibnamefont
  {Laub}}\ and\ \bibinfo {author} {\bibfnamefont {K.}~\bibnamefont {Meyer}},\
  }\href@noop {} {\bibfield  {journal} {\bibinfo  {journal} {Celestial
  Mechanics}\ }\textbf {\bibinfo {volume} {9}},\ \bibinfo {pages} {213}
  (\bibinfo {year} {1974})}\BibitemShut {NoStop}%
\bibitem [{\citenamefont {Wimmer}(1991)}]{Wimmer91}%
  \BibitemOpen
  \bibfield  {author} {\bibinfo {author} {\bibfnamefont {H.~K.}\ \bibnamefont
  {Wimmer}},\ }\href@noop {} {\bibfield  {journal} {\bibinfo  {journal} {Linear
  Algebra and its Applications}\ }\textbf {\bibinfo {volume} {147}},\ \bibinfo
  {pages} {411} (\bibinfo {year} {1991})}\BibitemShut {NoStop}%
\bibitem [{\citenamefont {H{\"o}rmander}(1995)}]{Hoermander95}%
  \BibitemOpen
  \bibfield  {author} {\bibinfo {author} {\bibfnamefont {L.}~\bibnamefont
  {H{\"o}rmander}},\ }\href@noop {} {\bibfield  {journal} {\bibinfo  {journal}
  {Mathematische Zeitschrift}\ }\textbf {\bibinfo {volume} {219}},\ \bibinfo
  {pages} {413} (\bibinfo {year} {1995})}\BibitemShut {NoStop}%
\bibitem [{\citenamefont {Dragt}(2017)}]{DragtBook}%
  \BibitemOpen
  \bibfield  {author} {\bibinfo {author} {\bibfnamefont {A.~J.}\ \bibnamefont
  {Dragt}},\ }\href@noop {} {\emph {\bibinfo {title} {Lie Methods for Nonlinear
  Dynamics with Applications to Accelerator Physics}}}\ (\bibinfo  {publisher}
  {In preparation},\ \bibinfo {year} {2017})\BibitemShut {NoStop}%
\bibitem [{\citenamefont {de~Gosson}(2006)}]{deGosson06}%
  \BibitemOpen
  \bibfield  {author} {\bibinfo {author} {\bibfnamefont {M.}~\bibnamefont
  {de~Gosson}},\ }\href@noop {} {\emph {\bibinfo {title} {Symplectic Geometry
  and Quantum Mechanics}}}\ (\bibinfo  {publisher} {Birkh{\"a}user Verlag},\
  \bibinfo {address} {Basel},\ \bibinfo {year} {2006})\BibitemShut {NoStop}%
\bibitem [{\citenamefont {Wolf}(2004)}]{Wolf04-173}%
  \BibitemOpen
  \bibfield  {author} {\bibinfo {author} {\bibfnamefont {K.~B.}\ \bibnamefont
  {Wolf}},\ }\enquote {\bibinfo {title} {Geometric optics on phase space},}\ \
  (\bibinfo  {publisher} {Springer},\ \bibinfo {address} {Berlin},\ \bibinfo
  {year} {2004})\ pp.\ \bibinfo {pages} {173--177}\BibitemShut {NoStop}%
\bibitem [{\citenamefont {Leach}(1977)}]{Leach77}%
  \BibitemOpen
  \bibfield  {author} {\bibinfo {author} {\bibfnamefont {P.}~\bibnamefont
  {Leach}},\ }\href@noop {} {\bibfield  {journal} {\bibinfo  {journal} {Journal
  of Mathematical Physics}\ }\textbf {\bibinfo {volume} {18}},\ \bibinfo
  {pages} {1608} (\bibinfo {year} {1977})}\BibitemShut {NoStop}%
\bibitem [{\citenamefont {Ekeland}(1990)}]{Ekeland90}%
  \BibitemOpen
  \bibfield  {author} {\bibinfo {author} {\bibfnamefont {I.}~\bibnamefont
  {Ekeland}},\ }\href@noop {} {\emph {\bibinfo {title} {Convexity methods in
  Hamiltonian mechanics}}}\ (\bibinfo  {publisher} {Springer-Verlag, Berlin},\
  \bibinfo {year} {1990})\BibitemShut {NoStop}%
\bibitem [{\citenamefont {Yakubovich}\ and\ \citenamefont
  {Starzhinskii}(1975)}]{Yakubovich75}%
  \BibitemOpen
  \bibfield  {author} {\bibinfo {author} {\bibfnamefont {V.}~\bibnamefont
  {Yakubovich}}\ and\ \bibinfo {author} {\bibfnamefont {V.}~\bibnamefont
  {Starzhinskii}},\ }\href@noop {} {\emph {\bibinfo {title} {Linear
  Differential Equations with Periodic Coefficients}}},\ Vol.~\bibinfo {volume}
  {I}\ (\bibinfo  {publisher} {Wiley},\ \bibinfo {address} {New York},\
  \bibinfo {year} {1975})\BibitemShut {NoStop}%
\bibitem [{\citenamefont {Zhang}\ \emph {et~al.}(2016)\citenamefont {Zhang},
  \citenamefont {Qin}, \citenamefont {Davidson}, \citenamefont {Liu},\ and\
  \citenamefont {Xiao}}]{Zhang16GH}%
  \BibitemOpen
  \bibfield  {author} {\bibinfo {author} {\bibfnamefont {R.}~\bibnamefont
  {Zhang}}, \bibinfo {author} {\bibfnamefont {H.}~\bibnamefont {Qin}}, \bibinfo
  {author} {\bibfnamefont {R.~C.}\ \bibnamefont {Davidson}}, \bibinfo {author}
  {\bibfnamefont {J.}~\bibnamefont {Liu}}, \ and\ \bibinfo {author}
  {\bibfnamefont {J.}~\bibnamefont {Xiao}},\ }\href@noop {} {\bibfield
  {journal} {\bibinfo  {journal} {Physics of Plasmas}\ }\textbf {\bibinfo
  {volume} {23}},\ \bibinfo {pages} {072111} (\bibinfo {year}
  {2016})}\BibitemShut {NoStop}%
\bibitem [{\citenamefont {Zhang}\ \emph {et~al.}(2017)\citenamefont {Zhang},
  \citenamefont {Qin}, \citenamefont {Shi}, \citenamefont {Liu},\ and\
  \citenamefont {Xiao}}]{Zhang1711}%
  \BibitemOpen
  \bibfield  {author} {\bibinfo {author} {\bibfnamefont {R.}~\bibnamefont
  {Zhang}}, \bibinfo {author} {\bibfnamefont {H.}~\bibnamefont {Qin}}, \bibinfo
  {author} {\bibfnamefont {Y.}~\bibnamefont {Shi}}, \bibinfo {author}
  {\bibfnamefont {J.}~\bibnamefont {Liu}}, \ and\ \bibinfo {author}
  {\bibfnamefont {J.}~\bibnamefont {Xiao}},\ }\href@noop {} {\  (\bibinfo
  {year} {2017})},\ \Eprint
  {http://arxiv.org/abs/http://arxiv.org/abs/1711.08248v2}
  {http://arxiv.org/abs/1711.08248v2} \BibitemShut {NoStop}%
\bibitem [{\citenamefont {Zhang}\ \emph {et~al.}(2018)\citenamefont {Zhang},
  \citenamefont {Qin}, \citenamefont {Xiao},\ and\ \citenamefont
  {Liu}}]{Zhang18}%
  \BibitemOpen
  \bibfield  {author} {\bibinfo {author} {\bibfnamefont {R.}~\bibnamefont
  {Zhang}}, \bibinfo {author} {\bibfnamefont {H.}~\bibnamefont {Qin}}, \bibinfo
  {author} {\bibfnamefont {J.}~\bibnamefont {Xiao}}, \ and\ \bibinfo {author}
  {\bibfnamefont {J.}~\bibnamefont {Liu}},\ }\href@noop {} {\  (\bibinfo {year}
  {2018})},\ \Eprint {http://arxiv.org/abs/http://arxiv.org/abs/1801.01676v3}
  {http://arxiv.org/abs/1801.01676v3} \BibitemShut {NoStop}%
\bibitem [{\citenamefont {Iwasawa}(1949)}]{Iwasawa49}%
  \BibitemOpen
  \bibfield  {author} {\bibinfo {author} {\bibfnamefont {K.}~\bibnamefont
  {Iwasawa}},\ }\href@noop {} {\bibfield  {journal} {\bibinfo  {journal}
  {Annals of Mathematics}\ }\textbf {\bibinfo {volume} {50}},\ \bibinfo {pages}
  {507} (\bibinfo {year} {1949})}\BibitemShut {NoStop}%
\end{thebibliography}%

\end{document}